\newtheorem{theorem}{Theorem}[section]
\newtheorem{proposition}[theorem]{Proposition}
\newtheorem{corollary}[theorem]{Corollary}
\newtheorem{definition}{Definition}[section]
\newenvironment{remark}[1][Remark]{\begin{trivlist}
\item[\hskip \labelsep {\bfseries #1}]}{\end{trivlist}}
\newcommand{\id}{\ensuremath{\mathds{1}}}
\begin{document}
\title{Dynamical Quantum Tomography}
\author{Michael Kech}
\email{kech@ma.tum.de}
\affiliation{Department of Mathematics, Technische Universit\"{a}t M\"{u}nchen, 85748 Garching, Germany}

\date{\today}

\begin{abstract}
We consider quantum state tomography with measurement procedures of the following type: First, we subject the quantum state we aim to identify to a know time evolution for a desired period of time. Afterwards we perform a measurement with a fixed measurement set-up. This procedure can then be repeated for other periods of time, the measurement set-up however remains unaltered. Given an $n$-dimensional system with suitable unitary dynamics, we show that any two states can be discriminated by performing a measurement with a set-up that has $n$ outcomes at $n+1$ points in time. Furthermore, we consider scenarios where prior information restricts the set of states to a subset of lower dimensionality. Given an $n$-dimensional system with suitable unitary dynamics and a semi-algebraic subset $\mathcal{R}$ of its state space, we show that any two states of the subset can be discriminated by performing a measurement with a set-up that has $n$ outcomes at $l$ steps of the time evolution if $(n-1)l\ge 2\dim\mathcal{R}$. In addition, by going beyond unitary dynamics, we show that one can in fact reduce to a set-up with the minimal number of two outcomes.

\end{abstract}
\keywords{quantum tomography, prior information}

\maketitle


%
%
%
\section{Introduction and Summary}
Quantum tomography is the task of identifying an unknown quantum state form the outcomes of a measurement. It is an integral part of quantum information science its implementation, however, is expensive. Yet, in some relevant scenarios it can be simplified: If prior information constrains the set of states to a subset of lower dimensionality the number of measurement outcomes necessary to uniquely identify a state can reduce considerably. In particular pure state tomography, or more generally tomography on states of bounded rank, has received significant attention and still is a field of active research.

Methods to find lower bounds on the number of measurement outcomes necessary to discriminate any two states of a given subset of the state space were first provided in \cite{heinosaari2013quantum} and later it was shown in \cite{kech} that these method apply in a rather general framework. However, from a practical point of view not all measurements might be feasible for implementation and thus one might want to restrict 
to a set of admissible measurements. Doing so, it is not clear whether the lower bounds established in \cite{heinosaari2013quantum,kech} still apply. In the context of pure state tomography it was shown in \cite{mondragon2013determination,jaming2014uniqueness,carmeli2015many,kech2} that any two pure states can be discriminated by performing four von-Neumann measurements, which is indeed tight for systems of dimension $n\geq5$. Additionally,  in \cite{kech2} this result was extended to more general subsets of the state space, including states of bounded rank. In the closely related fields of phase retrieval and low rank matrix recovery similar questions were addressed in \cite{balan2006signal,conca2014algebraic,edidin2015projections,vinzant2015small,kueng2014low,xu2015minimal}.  Finally, at the cost of requiring slightly more measurement outcomes, robust reconstruction algorithms are provided in \cite{QCS2,QCS3,GrossPhaseLift}.

\textit{Purpose of the present paper.}
The conventional approach to quantum tomography is to design a certain measurement set-up. Performing a statistical experiment, the state is identified from the relative frequencies of the measurement outcomes. If the system is of dimension $n$, at least $n^2$ outcomes are required to identify an unknown state. In the present paper we consider a more general scenario. Suppose we are given a measurement set-up and that, in addition, the system can be evolved according to a know time evolution. Rather than performing a conventional measurement with the set-up, we take advantage of the time evolution by considering measurement procedures of the following kind: Having evolved the system for a desired period of time, we perform a measurement with the given measurement set-up. Then, the procedure can be repeated for other periods of time.  Using this measurement scheme, we show that for suitable time evolutions any state can be identified with a measurement set-up that has solely two outcomes. Furthermore, also considering the scenario where prior information constrains the relevant set of states to a subset of lower dimensionality, we provide upper bounds on the minimal number of points in time on which one has to perform a measurement in order to be able to discriminate any to states of a given subset of the state space.


In the present paper we do not consider the algorithmic problem of reconstructing the state from the measurement data.

\textit{Outline.}
In Section II, we fix notation and introduce notions that are relevant for the following.

In Section III, we consider systems with discrete unitary dynamics. The first part is devoted to informationally complete tomography. Given the possibility to perform a measurement with a given measurement set-up at several time steps of the unitary evolution, we show that this set-up has to have at least $n$ outcomes to perform an informationally complete tomography if the system is $n$-dimensional. Furthermore, we show that under some condition on the time evolution, an informationally complete tomography can be performed by measuring with a set-up that has $n+1$ outcomes at $n$ time steps. In the second part we consider tomography on subsets of the state space. We show that performing a measurement with a set-up that has $m\ge n$ outcomes at sufficiently many points in time is a universal measurement scheme in the sense of \cite{kech2}. This allows us to prove a Whitney type embedding result: Given an $n$-dimensional system with suitable unitary dynamics and a semi-algebraic subset $\mathcal{R}$ of its state space, we show that any two states of the subset can be discriminated by performing a measurement with a set-up that has $m\geq n$ outcomes at $l\ge \frac{2\dim\mathcal{R}}{m-1}$ steps of the time evolution. Furthermore, we show that any two states of an $n$-dimensional system whose rank is at most $r$ can be discriminated by performing a measurement with a set-up that has $m\ge n$ outcomes at $l\ge\frac{4r(n-r)-1}{m-1}$ time steps. This upper bound on the number of time steps is close to the lower bound established in \cite{heinosaari2013quantum,kech}.

In Section IV, we generalize the system dynamics to a larger class of discrete CPTP time evolutions. Just like in Section III we prove a universality result in the sense of \cite{kech2}. Different form the case of unitary system dynamics, in this case there is just the trivial lower bound on the number of outcomes of the measurement set-up and indeed an informationally complete tomography can be performed by measuring with a set-up that has just two outcomes at $n^2-1$ points in time. Similar to the last section, given an $n$-dimensional system with suitable CPTP dynamics and a semi-algebraic subset $\mathcal{R}$ of its state space, we show that any two states of the subset can be discriminated by measuring with a set-up that has $m\geq 2$ outcomes at $l\ge \frac{2\dim\mathcal{R}}{m-1}$ steps of the time evolution. Furthermore, we show that any two state of an $n$-dimensional system of rank at most $r$ can be discriminated by measuring with a measurement set-up that has $m\geq 2$ outcomes at $l\ge\frac{4r(n-r)-1}{m-1}$ steps of the time evolution.

Having solely dealt with discrete time evolutions before, in appendix B we consider the possibility of performing measurements at rational points in time of a continuous time evolution.

\section{Preliminaries}
By $\mathcal{B}(\mathbb{C}^n)$ we denote the complex vector space of linear operators on $\mathbb{C}^n$. By $H(n)$  we denote the real vector space of hermitian operators on $\mathbb{C}^n$ and $H(n)_0$ denotes the subspace of $H(n)$ consisting of traceless hermitian operators. We equip both $H(n)$ and $\mathcal{B}(\mathbb{C}^n)$ with the Hilbert-Schmidt inner product. By $S_{H(n)_0}:=\{X\in H(n)_0:\ \|X\|_2=1\}$ we denote the unit sphere in $H(n)_0$ where $\|\cdot\|_2$ denotes the Hilbert-Schmidt norm. By $\mathcal{S}(\mathbb{C}^n)$ we denote the set of quantum states on $\mathbb{C}^n$, i.e. $\mathcal{S}(\mathbb{C}^n):=\{\rho\in H(n):\rho\geq 0, \text{tr}(\rho)=1\}$. Furthermore, for a subset $A\subseteq H(n)$, we denote by $\Delta(A)$ the set of differences of operators in $A$, i.e. $\Delta(A):=\{X-Y:X,Y\in A\}$. By $U(n)$ we denote the set of unitary operators on $\mathbb{C}^n$. We call a subset $A\subseteq \mathbb{R}^n$ an algebraic set if it is the real common zero locus of a set of real polynomials in $n$ variables and we call it a semi-algebraic set if it is the set of common solutions of a finite set of real polynomial inequalities in $n$ variables (cf. \cite{bochnak1998real}).

General quantum mechanical measurements can be described by positive operator valued measures (POVMs)\cite{holevo2011probabilistic, busch1995operational}. For the purpose of the present paper we use the following definition.
\begin{definition}(POVM.)
A POVM on $\mathbb{C}^n$ is a tuple $P=(Q_{1},\hdots,Q_{m})$ of positive semidefinite operators on $\mathbb{C}^n$ such that
\begin{align*}
\sum_{i=1}^{m}Q_{i}=\id_{\mathbb{C}^n}.
\end{align*}
An element of $P$ is called an effect operator. The number of outcomes of $P$ is $m$. 
\end{definition}
There is a linear map $h_P$ associated to each POVM $P=(Q_{1},\hdots,Q_{m})$ given by
\begin{align*}
h_{P}:H(n)&\to \mathbb{R}^{m} \\
  x&\mapsto\big( \text{tr}(Q_{1}x),\hdots,\text{tr}(Q_{m}x) \big).
\end{align*}
A whole experiment might consist of measuring more than one POVM. 
\begin{definition}(Measurement-scheme.)
A tuple of POVMs is called a measurement-scheme.
\end{definition}
For a tuple of natural numbers $I=(m_1,\hdots,m_l)$ let
\begin{align*}
\mathcal{M}(I):=\{(P^{i})_{i=1}^{l}:\ P^{i}\text{ is a POVM with }m_i\text{ outcomes.}\}. 
\end{align*}

Similar to a POVM, a measurement-scheme $M=(P^1,\hdots,P^k)$ induces a linear map
\begin{align*}
h_{M}:H(n)&\to \mathbb{R}^{|P^1|+\hdots+|P^k|} \\
  x&\mapsto\big( h_{P^1}(x),\hdots,h_{P^k}(x) \big).
\end{align*}
We equip $\mathcal{M}(I)$ with the topology induced by the metric
\begin{align*}
d(M,M^\prime):=\|h_M-h_{M^\prime}\|
\end{align*}
where $\|\cdot\|$ denotes the operator norm.

\begin{definition}\label{defcomplete}($\mathcal{R}$-complete.)
A measurement-scheme $M$ is called $\mathcal{R}$-complete for a subset $\mathcal{R}\subseteq \mathcal{S}(\mathbb{C}^n)$ iff $h_{M}|_{\mathcal{R}}$ is injective. An $\mathcal{S}(\mathbb{C}^n)$-complete POVM is called informationally complete.
\end{definition}
Furthermore, we use the following notion of stability of measurement-schemes (cf. Definition III.2 \cite{kech}).
\begin{definition}(Stability.)
Let $\mathcal{R}\subseteq \mathcal{S}(\mathbb{C}^n)$ be a subset and let $I$ be a tuple of natural numbers. An $\mathcal{R}$-complete measurement-scheme $M\in \mathcal{M}(I)$ is stably $\mathcal{R}$-complete iff there exists a neighbourhood $N\subseteq \mathcal{M}(I)$ of $M$ such that each measurement-scheme $M^\prime\in N$ is $\mathcal{R}$-complete.
\end{definition}
In case the subset $\mathcal{R}\subseteq \mathcal{S}(\mathbb{C}^n)$ is a smooth submanifold, the equivalence of this notion of stability to other stability properties is proven in \cite{kech}.

Finally, let us define the measurement-schemes we work with in the following. Let $l\in\mathbb{N}$. For $\mathcal{T}:\mathcal{B}(\mathbb{C}^n)\to \mathcal{B}(\mathbb{C}^n)$ a unital completely positive map and $P:=(Q_1,\hdots,Q_m)$ a POVM define the measurement-scheme 
\begin{align}\label{scheme}
\mathcal{T}^l(P):=\left(\left(Q_1,\hdots,Q_m\right),\left(\mathcal{T}(Q_1),\hdots,\mathcal{T}(Q_m)\right),\hdots,(\mathcal{T}^{l-1}(Q_1),\hdots,\mathcal{T}^{l-1}(Q_m))\right).
\end{align}
Here the POVM $P$ is understood to be the initial measurement set-up and $\mathcal{T}^l(P)$ is the measurement-scheme in which the POVM $P$ is measured at $l$ steps of the discrete time evolution described by completely positive trace preserving (CPTP) map  $\mathcal{T}^\dagger$.

\section{Unitary Time Evolution}\label{unitary}
\subsection*{Informationally Complete Tomography}
For given $U\in U(n)$ let
\begin{align*}
\mathcal{T}_U:\mathcal{B}(\mathbb{C}^n)&\to \mathcal{B}(\mathbb{C}^n)\\
x&\mapsto UxU^\dagger
\end{align*}
be the associated unital CPTP map. Furthermore, let $F_U$ be the fix point set of $\mathcal{T}_U$, i.e. $F_U:=\{X\in \mathcal{B}(\mathbb{C}^n):\mathcal{T}_U(X)=X\}$ and let let $F^{sa}_U:=\{X\in H(n):\mathcal{T}_U(X)=X\}$. Note that we have the block decomposition $\mathcal{T}_U=\mathcal{T}_U|_{F_U}\oplus \mathcal{T}_U|_{F_U^\bot}$.

We first deal with the problem of performing informationally complete quantum tomography using a given unitary time evolution.
\begin{proposition}\label{lower}
Let $P:=(P_1,\hdots,P_m)$ be a POVM and let $l\in\mathbb{N}$. If $\text{span}_{\mathbb{R}}\ \mathcal{T}_U^l(P)=H(n)$, then $m\geq n$.
\end{proposition}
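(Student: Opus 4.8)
The plan is to exploit that every map $\mathcal{T}_U^j$ is a $*$-automorphism of $\mathcal{B}(\mathbb{C}^n)$ and, more specifically, that in an eigenbasis of $U$ conjugation by a power of $U$ merely rescales off-diagonal matrix entries by phases and fixes the diagonal. Since the whole scheme $\mathcal{T}_U^l(P)$ is obtained from $P_1,\dots,P_m$ by such conjugations, passing to the diagonal collapses it to only $m$ vectors; requiring the scheme to span all of $H(n)$ will then force $m\ge n$.

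First I would invoke the spectral theorem to fix an orthonormal basis $e_1,\dots,e_n$ of $\mathbb{C}^n$ consisting of eigenvectors of $U$, with $Ue_k=\lambda_k e_k$ and $|\lambda_k|=1$. A one-line computation gives $\langle e_k|\mathcal{T}_U^j(X)|e_l\rangle=(\lambda_k\overline{\lambda_l})^{j}\langle e_k|X|e_l\rangle$ for any $X\in H(n)$, and in particular $\langle e_k|\mathcal{T}_U^j(X)|e_k\rangle=\langle e_k|X|e_k\rangle$ for all $k$ and all $j\ge 0$. Hence the $\mathbb{R}$-linear ``diagonal'' map $D\colon H(n)\to\mathbb{R}^n$, $X\mapsto(\langle e_1|X|e_1\rangle,\dots,\langle e_n|X|e_n\rangle)$, which is visibly surjective, satisfies $D\big(\mathcal{T}_U^j(P_i)\big)=D(P_i)$ for every $i$ and every $j$.

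Next I would apply $D$ to the spanning hypothesis: if $\mathrm{span}_{\mathbb{R}}\,\mathcal{T}_U^l(P)=H(n)$, then surjectivity of $D$ gives $\mathbb{R}^n=D(H(n))=\mathrm{span}_{\mathbb{R}}\{D(\mathcal{T}_U^j(P_i))\}=\mathrm{span}_{\mathbb{R}}\{D(P_1),\dots,D(P_m)\}$, and $m$ vectors can span $\mathbb{R}^n$ only if $m\ge n$, which is the claim. The additional structure that each $D(P_i)$ has nonnegative entries with $\sum_i D(P_i)=(1,\dots,1)$, coming from $P_i\ge 0$ and $\sum_i P_i=\id$, is available but not actually needed.

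I do not expect a real obstacle here; the only point requiring a little care is a possibly degenerate spectrum of $U$, which is harmless since any orthonormal eigenbasis works and the identity $D\circ\mathcal{T}_U^j=D$ holds irrespective of eigenvalue multiplicities. The one conceptual point worth isolating is the choice of invariant to project onto, namely the diagonal part in the eigenbasis of $U$, as opposed to, say, the trace, which alone would yield only the vacuous bound $m\ge 1$.
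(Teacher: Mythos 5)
Your proof is correct and follows essentially the same route as the paper: the paper projects onto the fixed-point space $F^{sa}_U$ (noting $\dim F^{sa}_U\ge n$ since it contains the real diagonal matrices in an eigenbasis of $U$, and that this projection intertwines trivially with $\mathcal{T}_U^j$), while you project onto the diagonal in an eigenbasis, which is just an explicit $n$-dimensional $\mathcal{T}_U$-invariant subspace on which $\mathcal{T}_U$ acts as the identity. The two arguments differ only in this cosmetic choice of invariant subspace.
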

\begin{proof}
Let $P:=(P_1,\hdots,P_m)$ be a POVM and assume $\text{span}_{\mathbb{R}}\mathcal{T}_U^l=H(n)$. Clearly $\dim F^{sa}_U\geq n$ because $F^{sa}_U$ contains the vector space of real matrices that are diagonal in a basis that diagonalizes $U$. 

Let $\Pi_{F^{sa}_U}:H(n)\to F^{sa}_U$ be the orthogonal projection on $F^{sa}_U$ and note that $\Pi_{F^{sa}_U}\circ \mathcal{T}_U^j(X)=\Pi_{F^{sa}_U}(X)$ for all $X\in H(n)$ and $j\in\mathbb{N}$. Therefore, $F^{sa}_U=\Pi_{F^{sa}_U}(H(n))=\Pi_{F^{sa}_U}(\text{span}_{\mathbb{R}}\mathcal{T}_U^l(P))=\text{span}_{\mathbb{R}}\{\Pi_{F^{sa}_U}(P_1),\hdots,\Pi_{F^{sa}_U}(P_m)\}$ and hence $\dim F^{sa}_U\leq m$.

Combining this, we conclude $m\geq \dim F^{sa}_U\geq n$.
\end{proof}
Thus, to allow for an informationally complete tomography, the POVM $P$ has to have at least $n$ outcomes. Furthermore, assuming that $P$ has $n+1$ outcomes, one has to measure at a minimum of $n$ points in time to achieve an informationally complete tomography and in the following we will see that this indeed suffices.

\begin{definition}(Feasible.)\label{def1}
A unitary matrix  $U\in U(n)$ is feasible iff the algebraic multiplicity of each eigenvalue of $\mathcal{T}_U|_{F_U^\bot}$ is one and $\dim F_U=n$.
\end{definition}

Let us note that almost all unitaries are feasible.

\begin{theorem}\label{thmUfull}(Informationally complete tomography.)
Let $U\in U(n)$ be feasible. Then, for almost all POVMs $P$ with $n+1$ outcomes the measurement scheme $\mathcal{T}_U^n(P)$ is informationally complete.
\end{theorem}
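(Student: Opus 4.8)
The plan is to show that the map $h_{\mathcal{T}_U^n(P)}$ is injective on $H(n)$ (equivalently $\operatorname{span}_{\mathbb{R}}\mathcal{T}_U^n(P)=H(n)$) for a generic choice of POVM $P$ with $n+1$ effect operators, and then observe that injectivity on all of $H(n)$ certainly implies injectivity on $\mathcal{S}(\mathbb{C}^n)$, i.e. informational completeness. First I would fix a basis $(e_1,\dots,e_n)$ of $\mathbb{C}^n$ diagonalizing $U$, with eigenvalues $\lambda_1,\dots,\lambda_n$; then $\mathcal{T}_U$ acts on the matrix unit $E_{jk}=\proj{e_j}{e_k}$ by $E_{jk}\mapsto (\lambda_j\bar\lambda_k)E_{jk}$, so the $E_{jk}$ are a (complex) eigenbasis of $\mathcal{T}_U$ with eigenvalues $\mu_{jk}:=\lambda_j\bar\lambda_k$. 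The fixed-point space $F_U$ is spanned by those $E_{jk}$ with $\mu_{jk}=1$; feasibility says $\dim F_U=n$, which forces $\mu_{jk}=1\iff j=k$, i.e. the $\lambda_j$ are distinct, $F_U=\operatorname{span}\{E_{11},\dots,E_{nn}\}$, and the off-diagonal eigenvalues $\mu_{jk}$ ($j\neq k$) are pairwise distinct (this is the other half of feasibility, applied to $\mathcal{T}_U|_{F_U^\perp}$). So on $F_U^\perp$ the map $\mathcal{T}_U$ is a diagonalizable operator with $n(n-1)$ distinct eigenvalues, all of modulus $1$ and none equal to $1$.

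Next I would decompose an effect operator $Q$ of $P$ as $Q = D + R$ where $D=\Pi_{F_U^{sa}}(Q)$ is its diagonal part and $R=\Pi_{(F_U^{sa})^\perp}(Q)$ is its off-diagonal (hollow) part, and similarly expand $R=\sum_{j\neq k} r_{jk}E_{jk}$ with $\overline{r_{jk}}=r_{kj}$. Then $\mathcal{T}_U^t(Q)=D+\sum_{j\neq k}\mu_{jk}^{t}r_{jk}E_{jk}$. The span of $\{\mathcal{T}_U^t(Q)\}_{t=0}^{n-1}$ over all effects $Q=Q^{(1)},\dots,Q^{(n+1)}$ thus has two pieces: the diagonal contributions span $\operatorname{span}\{D^{(1)},\dots,D^{(n+1)}\}$, which is generically all of $F_U^{sa}$ since there are $n+1>n=\dim F_U^{sa}$ of them; and the off-diagonal contributions. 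For the latter, group the indices $(j,k)$ with $j\neq k$ into conjugate pairs $\{(j,k),(k,j)\}$; there are $N:=n(n-1)/2$ such pairs, and for each pair the relevant coefficient data across the $n+1$ effects is the collection $\{r_{jk}^{(s)}\}_{s=1}^{n+1}\subset\mathbb{C}$. The key linear-algebra fact is a Vandermonde/Cauchy-type argument: because the eigenvalues $\mu_{jk}$ ($j\neq k$) are all distinct, the $n(n-1)\times n(n-1)$ matrix built from the $n(n-1)$ powers $t=0,\dots,n(n-1)-1$ is invertible — but here we only have $n$ time steps and $n+1$ effects, giving $n(n+1)\ge n(n-1)$ linear functionals, so by choosing the $r_{jk}^{(s)}$ generically one can arrange that these functionals separate the $n(n-1)$-dimensional off-diagonal subspace $(F_U^{sa})^\perp$. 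Concretely I would show the relevant "joint" matrix (rows indexed by $(s,t)\in\{1,\dots,n+1\}\times\{0,\dots,n-1\}$, columns by unordered pairs) has full column rank $N$ — equivalently a certain polynomial in the entries of $P$ is not identically zero — by exhibiting one explicit POVM where it works.

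So the structure is: (i) reduce to showing $h_{\mathcal{T}_U^n(P)}$ has trivial kernel; (ii) write the kernel condition as a polynomial (in fact, the nonvanishing of a suitable minor / determinant) in the entries of the effect operators of $P$, using the eigendecomposition above; (iii) invoke the standard fact that the set of POVMs on which a fixed nonzero polynomial vanishes is a closed measure-zero (indeed nowhere dense, Lebesgue-null) subset of the space of POVMs with $n+1$ outcomes; hence "almost all"; (iv) to see the polynomial is not identically zero, construct one feasible-compatible POVM $P_0$ for which $\mathcal{T}_U^n(P_0)$ is informationally complete — e.g. take the $n$ diagonal rank-one projectors $\proj{e_j}{e_j}$ rescaled plus one effect $Q_{n+1}$ whose off-diagonal part has all $r_{jk}$ equal to a common nonzero constant, so that the off-diagonal data reduces to a pure Vandermonde system in the distinct $\mu_{jk}$, which is invertible once $n\cdot 1 \ge$ (number of conjugate pairs) — and here one must be slightly careful that $n$ time steps with a single "mixing" effect may not be enough, so in fact one spreads the mixing over two or three of the $n+1$ effects.

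The main obstacle I anticipate is exactly this counting/rank step (ii)–(iv): verifying that $n$ time steps times $n+1$ outcomes really do furnish enough independent linear functionals to pin down the $n^2$-dimensional space $H(n)$ — the diagonal part needs $n$ functionals and is easy, but the off-diagonal part needs $n(n-1)$ functionals and we are producing them as products (time-power $\mu_{jk}^t$) $\times$ (effect coefficient $r_{jk}^{(s)}$), so the genericity argument has to show these products don't conspire to drop rank. The distinctness of the $\mu_{jk}$ (feasibility) is what makes the Vandermonde blocks nondegenerate, and the freedom in choosing the $n+1$ effects is what lets us couple the different conjugate-pair blocks; assembling an explicit $P_0$ witnessing full rank, and then packaging "works for $P_0$" $\Rightarrow$ "works generically" via the polynomial-nonvanishing principle, is the crux. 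Everything else (the block decomposition $\mathcal{T}_U=\mathcal{T}_U|_{F_U}\oplus\mathcal{T}_U|_{F_U^\perp}$, the identification of $F_U^{sa}$ with real diagonal matrices, "informationally complete" $\Leftarrow$ "injective on $H(n)$") is routine.
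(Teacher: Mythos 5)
Your overall framework (reduce to the non-vanishing of a polynomial in the entries of $P$, then invoke the measure-zero principle for zero sets of non-trivial polynomials) is sound, but the crux you yourself flag in steps (ii)--(iv) is a genuine gap, and the witness you sketch does not close it. With only $n$ time steps, the orbit $\{\mathcal{T}_U^t(Q)\}_{t=0}^{n-1}$ of a single effect $Q$ contributes at most $n$ linear functionals, while the off-diagonal subspace $(F_U^{sa})^\perp$ has dimension $n(n-1)$; so a single ``mixing'' effect -- or two or three of them -- cannot suffice for $n$ large, and in fact essentially all $n+1$ effects must carry off-diagonal weight, arranged so that the $n(n+1)\times n^2$ joint matrix has full rank $n^2$. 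Verifying full rank for such a global configuration is exactly the hard Vandermonde-coupling problem you describe, and no candidate is actually checked. A second, smaller issue: the functionals $\text{tr}(\mathcal{T}_U^t(Q_s)\,\cdot\,)$ do not split cleanly into a diagonal block and an off-diagonal block (each operator $\mathcal{T}_U^t(Q_s)=D^{(s)}+\sum_{j\neq k}\mu_{jk}^t r^{(s)}_{jk}E_{jk}$ mixes both, and with only $n$ times one cannot ergodically isolate $D^{(s)}$ inside the span), so ``the diagonal part is easy and contributes $n$ independent functionals'' cannot simply be added to the off-diagonal rank.

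The paper circumvents the need for any witness by working on the $X$-side rather than the $P$-side. It forms the incidence variety $\mathcal{M}=\{(P,X): \text{tr}(\mathcal{T}_U^j(Q_i)X)=0 \ \forall i,j\}$ inside $K\times(H(n)_0-\{0\})$ with $K=(H(n))^n$, stratifies the $X$'s by the number $2k$ of non-vanishing off-diagonal entries (and by whether $X$ overlaps the fixed-point space), and proves for each fixed $X$ in a stratum that the short orbit $\{(\mathcal{T}_U^\dagger)^jX\}_{j=0}^{2k}$ is linearly independent -- this is a genuine square Vandermonde determinant in the distinct eigenvalues $\mu_{jk}$ (including the frequency $1$ to absorb the diagonal component), so feasibility makes it non-zero. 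That gives codimension at least $n\min\{2k+1,n\}$ for the fiber of bad $P$'s over each such $X$, and a fiber-dimension count plus the semi-algebraic projection theorem yields $\dim\pi_1(\mathcal{M})<\dim K$, hence measure zero. If you want to keep your $P$-side determinant strategy, you would still need either an explicit full-rank configuration (nontrivial) or a dimension count equivalent to the paper's; as written, the proof is incomplete at its central step.
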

\begin{remark}
The previous statement also holds when measuring POVMs with $n$ outcomes at $n+1$ points in time.
\end{remark}
The proof of this result can be found in Subsection \ref{proof1}.

\subsection*{Tomography under Prior Information}

In this subsection we extend the results of the previous subsection to quantum tomography on subsets $\mathcal{R}\subseteq\mathcal{S}(\mathbb{C}^n)$. 
\begin{definition}
Let $\mathcal{R}\subseteq \mathcal{S}(\mathbb{C}^n)$ be a subset. A semi-algebraic set $\mathcal{D}\subseteq H(n)$ with $0\notin\mathcal{D}$ represents $\Delta(\mathcal{R})$ iff for every measurement-scheme $M$ with  $h_M(X)=0$ for some $X\in\Delta(\mathcal{R})-\{0\}$ there exists $Y\in\mathcal{D}$ such that $h_M(Y)=0$.
\end{definition}
\begin{remark}
Note that a measurement-scheme $M$ is not $\mathcal{R}$-complete if and only if there exists $X\in\Delta(\mathcal{R})-\{0\}$ such that $h_M(X)=0$. Thus, the set of measurement-schemes that solve the equation $h_M(Y)=0$ for some $Y\in\mathcal{D}$ contains the set of measurement-schemes that are not $\mathcal{R}$-complete.
\end{remark}

The next theorem is the main result of this section. It asserts that the measurement-scheme $\mathcal{T}^l_U(P)$ defined in Equation \eqref{scheme} is suited to perform tomography on arbitrary semi-algebraic subsets $\mathcal{R}\subseteq\mathcal{S}(\mathbb{C}^n)$.
\begin{theorem}\label{thmUprior}(Universality.)
Let $U\in U(n)$ be feasible. For $\mathcal{R}\subseteq \mathcal{S}(\mathbb{C}^n)$ a subset, let $\mathcal{D}$ be a semi-algebraic set that represents $\Delta(\mathcal{R})$. Let $m\ge n$ and let $l\in\mathbb{N}$ be such that $l(m-1)>\dim \mathcal{D}$. Then, for almost all POVMs $P$ with $m$ outcomes the measurement-scheme $\mathcal{T}_U^l(P)$ is stably $\mathcal{R}$-complete.
\end{theorem}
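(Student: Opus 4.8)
The plan is to reduce $\mathcal R$-completeness to a projective non-intersection statement and then establish the latter, for a generic POVM, by a dimension count on an incidence variety; the feasibility of $U$ enters through the spectral structure of $\mathcal T_U$ and through a ``generic spanning'' argument on the fixed-point space.

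\textbf{Step 1: reduction.} By the defining property of $\mathcal D$ and the Remark following it, a measurement-scheme $M$ that is not $\mathcal R$-complete satisfies $h_M(Y)=0$ for some $Y\in\mathcal D$; as this condition is homogeneous in $Y$, it amounts to $\mathbb P(\ker h_M)\cap\mathbb P(\mathcal D)\neq\emptyset$ in the real projective space $\mathbb P(H(n))$, where $\mathbb P(\mathcal D)$ is well-defined since $0\notin\mathcal D$. Passing to the closure $\overline{\mathbb P(\mathcal D)}$, a compact semi-algebraic set of dimension $\le\dim\mathcal D$, it suffices to prove that for almost every POVM $P$ with $m$ outcomes one has $\mathbb P(\ker h_{\mathcal T_U^l(P)})\cap\overline{\mathbb P(\mathcal D)}=\emptyset$. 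This condition is open on $M\in\mathcal M(I)$: were it true for $M$ but violated by some $M_k\to M$, compactness of $\overline{\mathbb P(\mathcal D)}$ would yield unit vectors $Y_k\to Y_\infty$ with $h_{M_k}(Y_k)=0$, hence $h_M(Y_\infty)=0$ by continuity of $M'\mapsto h_{M'}$ — a contradiction. So it already gives stable $\mathcal R$-completeness.

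\textbf{Step 2: the orbit-span lemma.} Decompose $H(n)=F^{sa}_U\oplus(F_U^\bot)^{sa}$ orthogonally with respect to the Hilbert--Schmidt inner product. Feasibility gives $\dim F^{sa}_U=n$ and, $U$ then having simple spectrum with no two eigenvalues negatives of one another, $\mathcal T_U$ restricted to $(F_U^\bot)^{sa}$ is a direct sum of $(n^2-n)/2$ planar rotations $P_1,\dots,P_{(n^2-n)/2}$ of pairwise distinct angles $\theta_k\in(0,\pi)$. Write $Y=Y_F+Y_\bot$ with $Y_F\in F^{sa}_U$ and $Y_\bot=\sum_k x_k$, $x_k\in P_k$. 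Since $\mathcal T_U$ is trace-preserving and fixes $F^{sa}_U$ pointwise, one checks that $h_{\mathcal T_U^l(P)}(Y)=0$ forces $\text{tr}(Y)=0$, and that $h_{\mathcal T_U^l(P)}(Y)=0$ holds iff $Q_i\perp V_Y$ for all effect operators $Q_i$, where $V_Y:=\text{span}\{\,Y_F+(\mathcal T_U^\dagger)^j Y_\bot:0\le j\le l-1\,\}$. A generalized Vandermonde argument for the distinct frequencies $\theta_k$ then gives the exact value $\dim V_Y=\min(l,\ 2s+[Y_F\neq0])$, where $s:=|\{k:x_k\neq0\}|$.

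\textbf{Step 3: the count.} Let $\mathcal P$ be the (full-dimensional, semi-algebraic) space of $m$-outcome POVMs and $Z=\{(P,[Y])\in\mathcal P\times\overline{\mathbb P(\mathcal D)}:h_{\mathcal T_U^l(P)}(Y)=0\}$, a semi-algebraic set; it is enough to check, off a null subset of $\mathcal P$, that $\dim Z<\dim\mathcal P$, since then $\pi_1(Z)\subseteq\mathcal P$ is null. Over a fixed $[Y]$ the fibre has codimension $(m-1)\dim V_Y$ in $\mathcal P$: the constraints $Q_i\perp V_Y$ are independent over the $m-1$ freely chosen effects, and $\id-\sum_i Q_i$ is then automatically orthogonal because $\text{tr}$ vanishes on $V_Y$. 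Stratify $\overline{\mathbb P(\mathcal D)}\cap\{\text{tr}=0\}$ by $s$ and by whether $Y_F=0$. On $\{s\ge1,\ Y_F\neq0\}$ the base has dimension $\le\min(\dim\mathcal D,\,n+2s-2)$ and $\dim V_Y\ge\min(l,2s+1)$; on $\{s\ge1,\ Y_F=0\}$ it has dimension $\le\min(\dim\mathcal D,2s-1)$ and $\dim V_Y=\min(l,2s)$. A short case analysis — when $\dim V_Y=l$ one invokes the hypothesis $\dim\mathcal D<l(m-1)$ directly, otherwise $\dim V_Y$ saturates near $2s$ and the required estimate collapses to the crude inequalities $n+2s-2<(2s+1)(m-1)$ and $2s-1<2s(m-1)$, both valid since $m\ge n\ge2$ — shows each such stratum contributes to $Z$ in dimension $<\dim\mathcal P$. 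The one remaining stratum is $s=0$, i.e.\ $Y\in F^{sa}_U\setminus\{0\}$, where $h_{\mathcal T_U^l(P)}(Y)=0$ means $Y\perp\Pi_{F^{sa}_U}(Q_i)$ for all $i$; since $m\ge n=\dim F^{sa}_U$, this is impossible for $Y\neq0$ unless $P$ lies in the proper (hence null) algebraic subset of $\mathcal P$ on which $\Pi_{F^{sa}_U}(Q_1),\dots,\Pi_{F^{sa}_U}(Q_m)$ fail to span $F^{sa}_U$. This exhausts $Z$, yielding the conclusion of Step 1. (For $\mathcal R=\mathcal S(\mathbb C^n)$ this reproduces Theorem~\ref{thmUfull}.)

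\textbf{Main obstacle.} The crux is that, unlike the informationally complete case, $\dim V_Y$ does not grow with $l$ for all $Y$: if $Y$ lies in a single rotation block (or in $F^{sa}_U$) then $\dim V_Y\le3$ (resp.\ $=1$) however large $l$ is, so a per-fibre codimension estimate alone would only permit $\dim\mathcal D=O(m)$. The resolution — and this is exactly the ``universality'' content behind the theorem — is that these degenerate directions span a correspondingly thin locus inside $\overline{\mathbb P(\mathcal D)}$, of dimension growing like $2s$ in step with $\dim V_Y$, so the incidence variety stays below $\dim\mathcal P$ across all strata; the fixed-point space $F^{sa}_U$ is the single genuinely exceptional piece and is removed not by a dimension count but by the generic-spanning argument, which is where feasibility's hypothesis $\dim F_U=n$ is used. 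Finally, making the stratum bookkeeping close against precisely the stated bound $l(m-1)>\dim\mathcal D$, rather than a slightly worse one, is what forces one to use the sharp value $\dim V_Y=\min(l,2s+1)$ — not merely $\min(l,2s)$ — in the case $Y_F\neq0$.
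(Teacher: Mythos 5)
Your proof is correct and follows essentially the same route as the paper's: your rotation-block decomposition and stratification by $s$ and by $Y_F=0$ versus $Y_F\neq 0$ are exactly the paper's eigenvector decomposition into the $e_{ij}$ with strata $R^1_k,R^2_k$, your orbit-span lemma is the paper's Vandermonde independence proposition, and the incidence-variety dimension count followed by the semi-algebraic projection theorem is identical. The only cosmetic deviations are the projective (rather than unit-sphere) normalization, the direct compactness argument for stability in place of citing Lemma IV.1 of the reference, and disposing of the $s=0$ stratum by a generic-spanning argument where a dimension count would equally do.
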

Note that this statement cannot hold for all POVMs as the POVM $P:=1/n(\mathds{1}_n,\hdots,\mathds{1}_n)$ is a counterexample. The proof of this theorem can be found in Subsection \ref{proof2}. 

In the following we discuss some consequences of Theorem \ref{thmUprior}. First, it directly implies a Whitney type embedding result.
\begin{corollary}(Whitney.)
Let $U\in U(n)$ be feasible and let $\mathcal{R}\subseteq \mathcal{S}(\mathbb{C}^n)$ be a subset. Let $m\ge n$ and let $l\in\mathbb{N}$ be such that $l(m-1)>2\dim \mathcal{R}$. Then, for almost all POVMs $P$ with $m$ outcomes the measurement scheme $\mathcal{T}_U^l(P)$ is stably $\mathcal{R}$-complete.
\end{corollary}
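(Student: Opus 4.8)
The plan is to deduce the corollary from Theorem~\ref{thmUprior} by exhibiting, for any subset $\mathcal{R}\subseteq\mathcal{S}(\mathbb{C}^n)$, a semi-algebraic set $\mathcal{D}\subseteq H(n)$ that represents $\Delta(\mathcal{R})$ and whose dimension is at most $2\dim\mathcal{R}$. Once such a $\mathcal{D}$ is in hand, the hypothesis $l(m-1)>2\dim\mathcal{R}\ge\dim\mathcal{D}$ lets us invoke Theorem~\ref{thmUprior} verbatim to conclude that for almost all POVMs $P$ with $m$ outcomes the scheme $\mathcal{T}_U^l(P)$ is stably $\mathcal{R}$-complete. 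So the entire content of the corollary is the reduction $\mathcal{R}\rightsquigarrow\mathcal{D}$.

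\textbf{Construction of $\mathcal{D}$.} The naive candidate is $\Delta(\mathcal{R})\setminus\{0\}$ itself, but $\mathcal{R}$ need not be semi-algebraic, so I would first pass to its Zariski closure. Let $\mathcal{A}:=\overline{\mathcal{R}}^{\,\mathrm{Zar}}\cap H(n)$ be the smallest real algebraic subset of $H(n)$ containing $\mathcal{R}$; this is algebraic by construction and $\dim\mathcal{A}=\dim\mathcal{R}$ (the real dimension is unchanged by taking Zariski closure, since $\mathcal{R}$ lies in $\mathcal{A}$ and any algebraic set containing $\mathcal{R}$ has dimension at least $\dim\mathcal{R}$). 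Now set $\mathcal{D}:=\Delta(\mathcal{A})\setminus\{0\}=\{X-Y:X,Y\in\mathcal{A}\}\setminus\{0\}$. The difference set $\Delta(\mathcal{A})$ is the image of the algebraic set $\mathcal{A}\times\mathcal{A}$ under the polynomial (indeed linear) map $(X,Y)\mapsto X-Y$, hence is semi-algebraic by the Tarski--Seidenberg theorem; removing the point $0$ keeps it semi-algebraic and ensures $0\notin\mathcal{D}$. For the dimension, the image of a semi-algebraic set under a polynomial map has dimension at most that of the source, so $\dim\mathcal{D}\le\dim(\mathcal{A}\times\mathcal{A})=2\dim\mathcal{A}=2\dim\mathcal{R}$.

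\textbf{Verifying $\mathcal{D}$ represents $\Delta(\mathcal{R})$.} Suppose $M$ is a measurement-scheme with $h_M(X)=0$ for some $X\in\Delta(\mathcal{R})\setminus\{0\}$. Since $\mathcal{R}\subseteq\mathcal{A}$ we have $\Delta(\mathcal{R})\subseteq\Delta(\mathcal{A})$, so $X\in\Delta(\mathcal{A})$, and as $X\neq0$ we get $X\in\mathcal{D}$; taking $Y:=X$ gives the required element of $\mathcal{D}$ with $h_M(Y)=0$. This is exactly the defining condition, so $\mathcal{D}$ represents $\Delta(\mathcal{R})$, and the corollary follows.

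\textbf{Expected main obstacle.} The only nontrivial points are the semi-algebraicity and dimension claims for $\mathcal{D}$; both are standard consequences of Tarski--Seidenberg (cf.~\cite{bochnak1998real}) but deserve a careful citation, and one must take care that the relevant notion of $\dim\mathcal{R}$ in the corollary statement agrees with the real semi-algebraic (equivalently, Zariski) dimension of $\mathcal{A}$ used here — for a semi-algebraic $\mathcal{R}$ this is immediate, and for a general subset this is how $\dim\mathcal{R}$ must be interpreted. Everything else is a direct application of Theorem~\ref{thmUprior}.
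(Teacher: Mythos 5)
Your proposal is correct and follows essentially the same route as the paper: replace $\mathcal{R}$ by its algebraic (Zariski) closure without changing the dimension, take $\mathcal{D}=\Delta(\overline{\mathcal{R}})\setminus\{0\}$, which is semi-algebraic of dimension at most $2\dim\mathcal{R}$, and apply Theorem \ref{thmUprior}. The only difference is that you unpack the Tarski--Seidenberg argument for semi-algebraicity and the dimension bound, where the paper simply cites the proof of Lemma IV.2 of \cite{kech2}.
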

\begin{proof}
Assume w.l.o.g. that $\mathcal{R}$ is algebraically closed, because if not we can replace $\mathcal{R}$ by its algebraic closure without changing the dimension (see Proposition 2.8.2 of \cite{bochnak1998real}). By the proof of Lemma IV.2 of \cite{kech2}, $\Delta(\mathcal{R})-\{0\}$ is a semi-algebraic set with $\dim \Delta(\mathcal{R})-\{0\}\le 2\dim\mathcal{R}$. Applying Theorem \ref{thmUprior} to $\Delta(\mathcal{R})-\{0\}$ concludes the proof.
\end{proof}

Theorem \ref{thmUprior} can also be applied to tomography on states of bounded rank. Let $\mathcal{S}_r(\mathbb{C}^n):=\{\rho\in\mathcal{S}(\mathbb{C}^n):\text{rank}(\rho)\leq r\}$ be the set of quantum states of rank at most $r$.

\begin{corollary}(Tomography on states of bounded rank.)\label{corBounded}
Let $U\in U(n)$ be feasible. Let $m\ge n$ and let $l\in\mathbb{N}$ be such that $l(m-1)\geq 4r(n-r)-1$. Then, for almost all POVMs $P$ with $m$ outcomes the measurement-scheme $\mathcal{T}_U^l(P)$ is stably $\mathcal{S}_r(\mathbb{C}^n)$-complete.
\end{corollary}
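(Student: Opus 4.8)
The plan is to deduce this from Theorem \ref{thmUprior} by exhibiting an explicit semi-algebraic set $\mathcal{D}$ that represents $\Delta(\mathcal{S}_r(\mathbb{C}^n))$ and has the right dimension. Note that $\Delta(\mathcal{S}_r(\mathbb{C}^n))-\{0\}$ consists of differences $\rho-\sigma$ of two states of rank at most $r$; such a difference has rank at most $2r$, and this bound is the natural candidate for the dimension count since the variety of traceless Hermitian operators of rank at most $2r$ (a bit less, see below) should have dimension close to $4r(n-r)$. So I would set $\mathcal{D}$ to be (a suitable rescaling of) the set of nonzero traceless Hermitian matrices whose positive part has rank at most $r$ and whose negative part has rank at most $r$ — equivalently $\mathcal{D}=\{X\in H(n)_0: X=A-B,\ A,B\geq 0,\ AB=0,\ \operatorname{rank}A\le r,\ \operatorname{rank}B\le r\}\setminus\{0\}$. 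This is semi-algebraic (it is the image under the polynomial map $(A,B)\mapsto A-B$ of a semi-algebraic set cut out by rank inequalities and the equation $AB=0$), and $0\notin\mathcal{D}$ by construction.

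The first key step is to verify that $\mathcal{D}$ genuinely represents $\Delta(\mathcal{S}_r(\mathbb{C}^n))$: given a measurement-scheme $M$ with $h_M(X)=0$ for some $X=\rho-\sigma\in\Delta(\mathcal{S}_r(\mathbb{C}^n))-\{0\}$, I need $Y\in\mathcal{D}$ with $h_M(Y)=0$. The point is that $h_M$ is linear, so it suffices to find such a $Y$ in the line $\mathbb{R}X$, and in fact $X$ itself (up to normalization of the trace-part — but $X$ is already traceless) works once we check $X\in\mathcal{D}$. Here the delicacy is that $\rho-\sigma$ need not have orthogonal positive and negative parts equal to $\rho$ minus overlap and $\sigma$ minus overlap of rank exactly $\le r$; however, by the Jordan decomposition $X=X_+-X_-$ with $X_+X_-=0$, and one checks $\operatorname{rank}X_+\le\operatorname{rank}\rho\le r$ and $\operatorname{rank}X_-\le\operatorname{rank}\sigma\le r$ (since $X_+\le\rho$ in the positive semidefinite order and $X_-\le\sigma$). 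Hence $X\in\mathcal{D}$ and we are done with this step. This mirrors the argument in \cite{kech2}; I would cite the analogous lemma there rather than reprove it.

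The second key step — and the one I expect to be the main obstacle — is the dimension computation $\dim\mathcal{D}\le 4r(n-r)-1$. A clean way is to bound $\dim\mathcal{D}$ by the dimension of the (semi-)algebraic set of pairs $(A,B)$ with $A,B\ge0$, $AB=0$, $\operatorname{rank}A\le r$, $\operatorname{rank}B\le r$: parametrize $A$ by its range (a point in a Grassmannian-type variety, dimension $\le 2r(n-r)+r^2$ counting the Hermitian rank-$r$ operators with fixed-dimensional range as $r^2$ real parameters and the range itself) and similarly $B$, but with $B$ supported on a subspace of the $(n-r)$-dimensional orthogonal complement of $\operatorname{range}A$, cutting its contribution. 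Carefully done, the variety of traceless Hermitian $X$ with $\operatorname{rank}X_+\le r$, $\operatorname{rank}X_-\le r$ has dimension $4r(n-r)$ generically (this is a standard count, e.g. the manifold of rank-$\le 2r$ symmetric operators near a generic point), and removing the origin and intersecting with the traceless hyperplane trims it to $\le 4r(n-r)-1$ — actually one should be a little careful: the traceless condition is already a single linear constraint, and the $-1$ in the corollary's hypothesis $l(m-1)\ge 4r(n-r)-1$ corresponds precisely to $\dim\mathcal{D}<4r(n-r)-1+1=4r(n-r)$, i.e. $\dim\mathcal{D}\le 4r(n-r)-1$, so I must pin down whether the relevant dimension is $4r(n-r)$ or $4r(n-r)-1$ and get the strict inequality $l(m-1)>\dim\mathcal{D}$ needed to invoke Theorem \ref{thmUprior}. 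Once $\dim\mathcal{D}\le 4r(n-r)-1$ is established, the hypothesis $l(m-1)\ge 4r(n-r)-1\ge\dim\mathcal{D}+ (\text{something})$ — here one checks $l(m-1)\ge 4r(n-r)-1 > 4r(n-r)-2\ge\dim\mathcal{D}-1$, hmm — so I'd phrase the dimension bound as $\dim\mathcal{D}\le 4r(n-r)-1$ and note $l(m-1)\ge 4r(n-r)-1 > \dim\mathcal{D}$ fails by one; the correct reading is that $\mathcal{D}$ has dimension strictly below $4r(n-r)-1$, say $\le 4r(n-r)-2$ after accounting for the traceless and normalization constraints, so that $l(m-1)\ge 4r(n-r)-1>\dim\mathcal{D}$ and Theorem \ref{thmUprior} applies, yielding stable $\mathcal{S}_r(\mathbb{C}^n)$-completeness for almost all $P$ with $m$ outcomes. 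Reconciling this off-by-one bookkeeping with the precise dimension formula from \cite{kech2} (Lemma IV.2 and its proof there handle exactly this variety) is the only genuinely fiddly part.
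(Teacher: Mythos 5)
Your proposal is correct and follows essentially the same route as the paper, whose entire proof is to apply Theorem \ref{thmUprior} to the semi-algebraic set $\mathcal{D}$ constructed in Lemma IV.6 of \cite{kech2} --- precisely the set of traceless Hermitian matrices with positive and negative parts of rank at most $r$ that you describe, with the Jordan-decomposition argument showing it represents $\Delta(\mathcal{S}_r(\mathbb{C}^n))$. Your off-by-one worry resolves exactly as you suspect: the relevant $\mathcal{D}$ is normalized to the unit sphere $S_{H(n)_0}$ (as in the proof of Theorem \ref{thmUprior}), so the cone of dimension $4r(n-r)$ loses one dimension to tracelessness and one to normalization, giving $\dim\mathcal{D}\le 4r(n-r)-2$ and hence $l(m-1)\ge 4r(n-r)-1>\dim\mathcal{D}$ as required.
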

\begin{proof}
The proof follows directly from applying  Theorem \ref{thmUprior} to the algebraic set $\mathcal{D}$ defined in Lemma IV.6 of \cite{kech2}.
\end{proof}
\begin{remark}
If a measurement-scheme is $\mathcal{S}_r^n$-complete, it was shown in \cite{heinosaari2013quantum,kech} that, up to terms at most logarithmic in $n$, we have $m\ge 4r(n-r)$ and in this sense the lower bound given in Corollary \ref{corBounded} is nearly optimal. 
\end{remark}
Let us note that similar to Corollary V.12 of \cite{kech2}, Corollary \ref{corBounded} implies corresponding results for tomography on states of fixed spectrum. 

\section{CPTP Time Evolution}
In this section we generalize the scenario of Section \ref{unitary} by considering a larger class of system dynamics. With this generalization the lower bound on the dimension of the initial POVM as given by Proposition \ref{lower} can be relaxed. Indeed we show that one dimensional POVMs can suffice for informationally complete tomography and that they are also suited for tomography on subsets $\mathcal{R}\subseteq\mathcal{S}(\mathbb{C}^n)$.

\begin{definition}(Feasible.)\label{def2}
A CPTP map $\mathcal{T}$ is feasible iff it is invertible and the algebraic multiplicity of each of its eigenvalues is one.
\end{definition}

The following result is the main result of this section. It is a universality result analogous to Theorem \ref{thmUprior}.
\begin{theorem}(Universality.)\label{thmUCPTP}
Let $\mathcal{T}$ be a feasible CPTP map. For $\mathcal{R}\subseteq\mathcal{S}(\mathbb{C}^n)$ a subset, let $\mathcal{D}$ be a semi-algebraic set that represents $\Delta(\mathcal{R})$. Furthermore, let $m\in\mathbb{N}$ and let $l\in\mathbb{N}$ be such that $l(m-1)>\dim\mathcal{D}$. Then, for almost all POVMs $P$ with $m$ outcomes the measurement-scheme $(\mathcal{T}^\dagger)^l(P)$ is stably $\mathcal{R}$-complete.
\end{theorem}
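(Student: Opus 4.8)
The plan is to follow, and in fact to simplify, the argument behind Theorem~\ref{thmUprior}. The one structural fact needed is that feasibility of $\mathcal{T}$ (invertibility together with simplicity of every eigenvalue) forces $\mathcal{T}$ to be diagonalizable with pairwise distinct eigenvalues, hence to possess only finitely many invariant subspaces; this replaces — and is cleaner than — the splitting $\mathcal{B}(\mathbb{C}^n)=F_U\oplus F_U^\perp$ of the unitary case, and is the reason the lower bound $m\ge n$ of Proposition~\ref{lower} evaporates and the trivial bound $m\ge 2$ suffices here. Write $M_P:=(\mathcal{T}^\dagger)^l(P)$ for $P=(Q_1,\dots,Q_m)$ and record the bookkeeping identity $h_{M_P}(X)_{i,j}=\text{tr}\big((\mathcal{T}^\dagger)^j(Q_i)\,X\big)=\text{tr}\big(Q_i\,\mathcal{T}^j(X)\big)$ for $1\le i\le m$, $0\le j\le l-1$. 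Setting $W_X:=\text{span}_{\mathbb{R}}\{X,\mathcal{T}X,\dots,\mathcal{T}^{l-1}X\}\subseteq H(n)$, we see that $h_{M_P}(X)=0$ if and only if $Q_i$ is Hilbert–Schmidt orthogonal to $W_X$ for $i=1,\dots,m-1$; the relation $\sum_iQ_i=\mathds{1}$ makes the $i=m$ equation redundant once $X$ is traceless, and if $\text{tr}(X)\ne0$ then $h_{M_P}(X)\ne0$ for every $P$, so we may and do assume $\mathcal{D}\subseteq H(n)_0$.

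Next I would perform the two reductions to a single dimension count. Since $\mathcal{D}$ represents $\Delta(\mathcal{R})$, a measurement-scheme $M'$ with $h_{M'}$ nowhere-vanishing on $\mathcal{D}$ is automatically $\mathcal{R}$-complete; and if $h_{M_P}$ is nowhere-vanishing on the compact projectivization $\overline{\widehat{\mathcal{D}}}$ of $\mathcal{D}$, then by compactness together with $d(M',M_P)=\|h_{M'}-h_{M_P}\|$ the same holds for every $M'$ in a neighbourhood of $M_P$, so $M_P$ is stably $\mathcal{R}$-complete. Hence it is enough to prove that the semi-algebraic "bad" set $B:=\{P:\exists\,X\in\overline{\widehat{\mathcal{D}}}\text{ with }h_{M_P}(X)=0\}$ has dimension strictly below $(m-1)n^2=\dim\{(Q_1,\dots,Q_m)\in H(n)^m:\sum_iQ_i=\mathds{1}\}$; a semi-algebraic subset of the set of POVMs with $m$ outcomes of smaller dimension is Lebesgue-null. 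To bound $\dim B$, stratify $\overline{\widehat{\mathcal{D}}}$ by $w(X):=\dim W_X\in\{1,\dots,l\}$ (the value $0$ is excluded since $0\notin\mathcal{D}$). On the stratum $w(X)=l$, for each fixed $X$ the set $\{P:Q_i\perp W_X,\ i<m\}$ has dimension $(m-1)(n^2-l)$, the conditions lying in the $m-1$ disjoint blocks $Q_1,\dots,Q_{m-1}$; forming the incidence set over this stratum and projecting to $P$ (projection does not raise dimension for semi-algebraic sets, \cite{bochnak1998real}) contributes to $B$ a set of dimension at most $\dim\mathcal{D}+(m-1)(n^2-l)=(m-1)n^2-\big(l(m-1)-\dim\mathcal{D}\big)<(m-1)n^2$, precisely by the hypothesis $l(m-1)>\dim\mathcal{D}$.

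The strata $w(X)<l$ are the heart of the matter, and where feasibility is used. If $w(X)<l$ then $\mathcal{T}^lX\in W_X$, so $W_X$ is a nonzero $\mathcal{T}$-invariant subspace; feasibility gives only finitely many of these, so, denoting by $V_1,\dots,V_N$ those contained in $H(n)_0$, we have $W_X\in\{V_1,\dots,V_N\}$, and crucially $\{P:h_{M_P}(X)=0\}=\{P:Q_i\perp W_X,\ i<m\}$ then depends only on the subspace, not on $X$. Therefore the combined contribution of all these strata to $B$ lies in the finite union $\bigcup_{a}\{P:Q_i\perp V_a,\ i<m\}$, each piece of dimension $(m-1)(n^2-\dim V_a)\le(m-1)(n^2-1)<(m-1)n^2$ since $\dim V_a\ge1$. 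Putting the two cases together yields $\dim B<(m-1)n^2$, which closes the argument. (The hypothesis cannot be improved from "almost all" to "all": $P=\tfrac1n(\mathds{1},\dots,\mathds{1})$ annihilates every traceless operator and so is never $\mathcal{R}$-complete unless $\Delta(\mathcal{R})=\{0\}$.)

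I expect the genuine obstacle to be exactly the treatment of the low strata $w(X)<l$: the naive "base $\dim\mathcal{D}$ plus fibre $(m-1)(n^2-1)$" estimate is far too lossy, and everything hinges on recognizing that, once $W_X$ is pinned to one of the finitely many $\mathcal{T}$-invariant subspaces, the vanishing locus in POVM space collapses onto a single fixed linear subspace of sub-full dimension — this collapse is precisely the content of "feasible" in this setting. A secondary technical point is the passage to $\overline{\widehat{\mathcal{D}}}$ needed for the stability upgrade: one checks that $\{X:w(X)\le k\}$ is closed (it is the finite union of the $\mathcal{T}$-invariant subspaces of dimension $\le k$) and that $w$ is lower semicontinuous, so limit points of $\widehat{\mathcal{D}}$ either remain in the top stratum or land inside the $V_a$, letting the stratification apply verbatim on the closure. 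The remaining ingredients — semialgebraicity of the incidence sets, "projection and closure do not raise dimension", "a semi-algebraic family has dimension at most base plus fibre" — are standard and can be quoted from \cite{bochnak1998real}; the resulting statement is then a universal measurement scheme in the sense of \cite{kech2}, and the Whitney-type and bounded-rank corollaries follow exactly as in Section~\ref{unitary}.
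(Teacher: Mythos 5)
Your proposal is correct, and while the overall template (stratify the set of witnesses, bound each incidence set by base plus fibre, project to POVM space) is the same as the paper's, your treatment of the degenerate strata is genuinely different. The paper's proof in Subsection \ref{proof3} expands $X$ in the eigenbasis of $\mathcal{T}$, stratifies $\mathcal{D}$ by the number $k$ of nonvanishing eigen-coefficients, proves that this stratum has dimension at most $k-1$ (Proposition \ref{D}) and that the orbit $\{X,\mathcal{T}X,\dots,\mathcal{T}^{k-1}X\}$ is linearly independent via a Vandermonde determinant (the analogue of Proposition \ref{propind}), and wins because $(k-1)-km<0$. You instead observe that whenever $w(X)=\dim\text{span}\{X,\dots,\mathcal{T}^{l-1}X\}<l$ the Krylov space $W_X$ is $\mathcal{T}$-invariant, that feasibility leaves only finitely many invariant subspaces, and that the vanishing locus in POVM space depends only on $W_X$ and not on $X$ itself --- so the union over the entire degenerate part collapses onto finitely many fixed linear subspaces of codimension at least $m-1\ge 1$, with no need for either the Vandermonde computation or the stratum dimension bound there. (Since $w(X)=\min(l,k)$ for feasible $\mathcal{T}$, the two stratifications coincide as partitions; only the estimates differ.) Your collapse argument is cleaner in this setting and correctly isolates why the constraint $m\ge n$ of Section \ref{unitary} disappears: for a unitary, $\mathcal{T}_U$ restricted to its fixed-point algebra is the identity, so there is a continuum of invariant subspaces inside $F_U$ and no such collapse is available, which is exactly where the paper's coefficient-counting is still needed. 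Two small points to spell out if you write this up: the finiteness claim must be made for \emph{real} invariant subspaces of $\mathcal{T}|_{H(n)}$ (each is cut out by a conjugation-closed subset of the $n^2$ distinct eigenvalues of the complexification, so there are at most $2^{n^2}$ of them), and if $l>n^2-1$ the top stratum is simply empty, so the case split degenerates harmlessly. Your passage to the compact set $\overline{\widehat{\mathcal{D}}}$ and the compactness-based stability upgrade matches the paper's replacement of $\mathcal{D}$ by $\overline{\phi(\mathcal{D})}$ together with Lemma IV.1 of \cite{kech2}.
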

The proof of this theorem can be found in Subsection \ref{proof3}. An immediate consequence is the case $k=1$ which may be of particular interest as it shows that in fact an initial POVM of minimal dimension suffices to perform  tomography on arbitrary subsets of the state space.
\begin{corollary}\label{cor1}
Let $\mathcal{T}$ be a feasible CPTP map. For $\mathcal{R}\subseteq\mathcal{S}(\mathbb{C}^n)$ a subset, let $\mathcal{D}$ be a semi-algebraic set that represents $\Delta(\mathcal{R})$. Furthermore, let $l\in\mathbb{N}$ be such that $l>\dim\mathcal{D}$. Then, for almost all POVMs $P$ with two outcomes the measurement-scheme $(\mathcal{T}^\dagger)^{l}(P)$ is stably $\mathcal{R}$-complete.
\end{corollary}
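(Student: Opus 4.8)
The plan is to obtain this statement immediately as the special case $m=2$ of Theorem \ref{thmUCPTP}. Recall that a two-outcome POVM is a pair $P=(Q,\id_{\mathbb{C}^n}-Q)$ with $0\le Q\le \id_{\mathbb{C}^n}$, so the space of such POVMs is an open (full-dimensional, for $n\ge 2$) subset of the affine slice $\{(Q,\id_{\mathbb{C}^n}-Q):Q\in H(n)\}$, effectively coordinatised by the single effect operator $Q$; this is precisely why the count of ``free directions per time step'' entering the dimension estimate of Theorem \ref{thmUCPTP} is $m-1=1$. Substituting $m=2$ into the hypothesis $l(m-1)>\dim\mathcal{D}$ of that theorem yields exactly $l>\dim\mathcal{D}$, which is our assumption, and the conclusion of Theorem \ref{thmUCPTP} is then verbatim the conclusion we want.

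Concretely I would proceed in three steps. First, note that the case $m=1$ is vacuous, since the only one-outcome POVM is $(\id_{\mathbb{C}^n})$ and its map $h_P$ is constant on $\mathcal{S}(\mathbb{C}^n)$, so $m=2$ really is the minimal meaningful value. Second, invoke Theorem \ref{thmUCPTP} with $m=2$ and with the given semi-algebraic representative $\mathcal{D}$ of $\Delta(\mathcal{R})$, using the hypothesis $l>\dim\mathcal{D}$. Third, verify that the ``almost all'' in Theorem \ref{thmUCPTP} remains a non-trivial statement here, i.e.\ that the exceptional set of POVMs excluded in its proof is a proper subset of the positive-dimensional set of two-outcome POVMs; this is automatic because that exceptional set is cut out by the non-identical vanishing of certain polynomials (or a lower-dimensional transversality condition) in the POVM parameters, none of which degenerates to an identity merely by restricting to $m=2$.

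I do not expect a genuine obstacle: all the substantive content — the Vandermonde-type structure of $Q,\mathcal{T}^\dagger Q,\dots,(\mathcal{T}^\dagger)^{l-1}Q$ coming from $\mathcal{T}$ being feasible (invertible with simple spectrum), together with the parametrized dimension count over $\mathcal{D}$ — is already established in the proof of Theorem \ref{thmUCPTP}. The only point deserving an explicit sentence in the write-up is the genericity bookkeeping of the third step, making clear that the measure-zero exceptional set does not exhaust the two-outcome slice; everything else is an immediate specialization.
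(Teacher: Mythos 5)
Your proposal is correct and matches the paper exactly: the paper gives no separate proof of Corollary \ref{cor1}, presenting it as the immediate specialization of Theorem \ref{thmUCPTP} to $m=2$, where the hypothesis $l(m-1)>\dim\mathcal{D}$ becomes $l>\dim\mathcal{D}$. Your additional remarks on the vacuity of $m=1$ and on the exceptional set not exhausting the two-outcome slice are sensible bookkeeping but not needed beyond what the theorem already delivers.
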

\begin{remark}
Given a POVM with two outcomes $P:=\{P_1,P_2\}$, all the relevant information is contained in $P_1$ as $P_2=\id_{\mathbb{C}^n}-P_1$. In this sense one can identify a one dimensional POVM with an observable $O:=P_1$. Under this identification the measurement-scheme $\mathcal{T}^{l}(P)$ corresponds to measuring the expectation value of $O$ at $l$ time steps of the time evolution given by $\mathcal{T}$. Corollary \ref{cor1} then states that any two states of a given subset $\mathcal{R}\subseteq\mathcal{S}(\mathbb{C}^n)$ can be discriminated by determining the expectation value of a single observable at sufficiently many time steps.
\end{remark}
In the remainder of this section we give some further corollaries of Theorem \ref{thmUCPTP}. Of course Theorem \ref{thmUCPTP} also covers the case of informationally complete tomography.
\begin{corollary}(Informationally Complete Tomography)
Let $\mathcal{T}$ be a feasible CPTP map. Furthermore, let $m\in\mathbb{N}$ and let $l\in\mathbb{N}$ be such that $l(m-1)\geq n^2-1$. Then, for almost all POVMs $P$ with $m$ outcomes the measurement-scheme $(\mathcal{T}^\dagger)^{l}(P)$ is informationally complete.
\end{corollary}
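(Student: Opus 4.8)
The plan is to deduce this from the universality result, Theorem~\ref{thmUCPTP}, applied with $\mathcal{R}=\mathcal{S}(\mathbb{C}^n)$; the only thing to supply is a well-adapted semi-algebraic set $\mathcal{D}$ representing $\Delta(\mathcal{S}(\mathbb{C}^n))$.

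First I would exhibit such a $\mathcal{D}$, the natural candidate being the unit sphere $S_{H(n)_0}$. It is cut out by the polynomial conditions $\text{tr}(X)=0$ and $\|X\|_2^2=1$, hence it is an algebraic and in particular semi-algebraic subset of $H(n)$, and clearly $0\notin S_{H(n)_0}$. To check that it represents $\Delta(\mathcal{S}(\mathbb{C}^n))$, let $M$ be any measurement-scheme with $h_M(X)=0$ for some $X\in\Delta(\mathcal{S}(\mathbb{C}^n))-\{0\}$. Writing $X=\rho-\sigma$ with $\rho,\sigma\in\mathcal{S}(\mathbb{C}^n)$, the operator $X$ is hermitian, traceless and nonzero, so $Y:=X/\|X\|_2\in S_{H(n)_0}$, and since $h_M$ is linear, $h_M(Y)=0$. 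Thus $Y\in\mathcal{D}$ does the job.

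Next I would do the dimension count. Since $H(n)_0$ is the kernel of the trace functional on the $n^2$-dimensional real vector space $H(n)$, we have $\dim H(n)_0=n^2-1$, and therefore $\dim S_{H(n)_0}=n^2-2$. Hence the hypothesis $l(m-1)\ge n^2-1$ is exactly the condition $l(m-1)>n^2-2=\dim S_{H(n)_0}=\dim\mathcal{D}$ required by Theorem~\ref{thmUCPTP}. (The passage to the sphere is what yields the sharp numerology: using $\Delta(\mathcal{S}(\mathbb{C}^n))-\{0\}$ itself, which has full dimension $n^2-1$, would only give $l(m-1)\ge n^2$.) Applying Theorem~\ref{thmUCPTP} to this $\mathcal{D}$ then gives that for almost all POVMs $P$ with $m$ outcomes the measurement-scheme $(\mathcal{T}^\dagger)^l(P)$ is stably $\mathcal{S}(\mathbb{C}^n)$-complete, in particular $\mathcal{S}(\mathbb{C}^n)$-complete, which by Definition~\ref{defcomplete} is precisely the statement that it is informationally complete.

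There is no genuine obstacle in this argument: all the substance is carried by Theorem~\ref{thmUCPTP}. The only two points demanding a little care are the verification that $S_{H(n)_0}$ represents $\Delta(\mathcal{S}(\mathbb{C}^n))$ via the scaling trick (using linearity of $h_M$), and the dimension bookkeeping, where one must remember that the relevant quantity is $\dim S_{H(n)_0}=n^2-2$ and not $\dim H(n)_0=n^2-1$.
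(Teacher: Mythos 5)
Your proposal is correct and follows exactly the paper's own route: take $\mathcal{D}=S_{H(n)_0}$, check it represents $\Delta(\mathcal{S}(\mathbb{C}^n))$ by the scaling trick, and apply Theorem~\ref{thmUCPTP} with $\dim S_{H(n)_0}=n^2-2$. You have merely filled in the details the paper leaves implicit (and your dimension $n^2-2$ is the correct value; the paper's stated ``$n-2$'' is a typo).
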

\begin{proof}
Note that $S_{H(n)_0}$ represents $\Delta(\mathcal{S}(\mathbb{C}^n))$. Applying Theorem \ref{thmUCPTP} to $S_{H(n)_0}$, together with the observation that $\dim S_{H(n)_0}=n-2$, concludes the proof.
\end{proof}

Another immediate consequence is a Whitney type embedding result.
\begin{corollary}(Whitney.)
Let $\mathcal{T}$ be a feasible CPTP map and let $\mathcal{R}\subseteq\mathcal{S}(\mathbb{C}^n)$ be a subset. Furthermore, let $m\in\mathbb{N}$ and let $l\in\mathbb{N}$ be such that $l(m-1)>2\dim\mathcal{R}$. Then for almost all POVMs $P$ with $m$ outcomes the measurement-scheme $(\mathcal{T}^\dagger)^l(P)$ is stably $\mathcal{R}$-complete.
\end{corollary}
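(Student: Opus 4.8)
The plan is to reduce the statement to Theorem~\ref{thmUCPTP} exactly as was done for the unitary Whitney corollary. First I would assume without loss of generality that $\mathcal{R}$ is algebraically closed, since passing to the algebraic closure does not change the dimension (Proposition~2.8.2 of \cite{bochnak1998real}) and only enlarges $\Delta(\mathcal{R})$, hence can only make $\mathcal{R}$-completeness harder; a measurement-scheme that is $\overline{\mathcal{R}}$-complete is in particular $\mathcal{R}$-complete. Next I would invoke the proof of Lemma~IV.2 of \cite{kech2} to conclude that $\Delta(\mathcal{R})-\{0\}$ is a semi-algebraic subset of $H(n)$ with $\dim\big(\Delta(\mathcal{R})-\{0\}\big)\le 2\dim\mathcal{R}$, and that $0\notin\Delta(\mathcal{R})-\{0\}$ by construction.

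The key point is that $\Delta(\mathcal{R})-\{0\}$ trivially represents $\Delta(\mathcal{R})$ in the sense of the definition preceding Theorem~\ref{thmUprior}: if $h_M(X)=0$ for some $X\in\Delta(\mathcal{R})-\{0\}$, then taking $Y:=X\in\Delta(\mathcal{R})-\{0\}$ itself witnesses the required property. Thus $\mathcal{D}:=\Delta(\mathcal{R})-\{0\}$ is a legitimate choice of representing set, and by the dimension bound above the hypothesis $l(m-1)>2\dim\mathcal{R}$ implies $l(m-1)>\dim\mathcal{D}$. Applying Theorem~\ref{thmUCPTP} with this $\mathcal{D}$, for almost all POVMs $P$ with $m$ outcomes the measurement-scheme $(\mathcal{T}^\dagger)^l(P)$ is stably $\mathcal{R}$-complete, which is the assertion.

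I do not anticipate any genuine obstacle here: the corollary is a purely formal consequence of the already-proven Theorem~\ref{thmUCPTP}, with the only substantive external inputs being the semi-algebraicity and dimension estimate for $\Delta(\mathcal{R})$ from \cite{kech2} and the elementary closure remark from real algebraic geometry. The mildest subtlety to get right is the direction of the dimension inequality after replacing $\mathcal{R}$ by its closure and the verification that $\Delta(\mathcal{R})-\{0\}$ indeed avoids $0$ and satisfies the formal ``represents'' condition, but these are immediate. In short, the argument is word-for-word the same as the unitary Whitney corollary above, with Theorem~\ref{thmUprior} replaced by Theorem~\ref{thmUCPTP}.
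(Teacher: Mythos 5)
Your proposal is correct and follows essentially the same route as the paper: pass to the algebraic closure of $\mathcal{R}$, use Lemma IV.2 of \cite{kech2} to see that $\Delta(\mathcal{R})-\{0\}$ is semi-algebraic of dimension at most $2\dim\mathcal{R}$, and apply Theorem \ref{thmUCPTP} with $\mathcal{D}=\Delta(\mathcal{R})-\{0\}$. The extra remarks you include (that $\Delta(\mathcal{R})-\{0\}$ trivially represents $\Delta(\mathcal{R})$ and that enlarging $\mathcal{R}$ only strengthens the conclusion) are correct and merely make explicit what the paper leaves implicit.
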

\begin{proof}
Assume w.l.o.g. that $\mathcal{R}$ is algebraically closed, because if not we can replace $\mathcal{R}$ by its algebraic closure without changing the dimension (see Proposition 2.8.2 of \cite{bochnak1998real}). By the proof of Lemma IV.2 of \cite{kech2}, $\Delta(\mathcal{R})-\{0\}$ is a semi-algebraic set with $\dim \Delta(\mathcal{R})-\{0\}\le 2\dim\mathcal{R}$. Applying Theorem \ref{thmUCPTP} to $\Delta(\mathcal{R})-\{0\}$ concludes the proof.
\end{proof}
Finally, Theorem \ref{thmUCPTP} can also be straightforwardly applied to tomography on states of bounded rank.
\begin{corollary}(Tomography on states of bounded rank.)
Let $\mathcal{T}$ be a feasible CPTP map. Furthermore, let $m\in\mathbb{N}$ and let $l\in\mathbb{N}$ be such that  $l(m-1)\geq 4r(n-r)-1$. Then, for almost all POVMs $P$ with $m$ outcomes the measurement scheme $(\mathcal{T}^\dagger)^l(P)$ is stably $\mathcal{S}_r(\mathbb{C}^n)$-complete.
\end{corollary}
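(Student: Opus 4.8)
The plan is to obtain this corollary as an immediate specialization of Theorem \ref{thmUCPTP}, in exactly the way Corollary \ref{corBounded} was deduced from Theorem \ref{thmUprior} in the unitary setting. The only input that is not internal to the present paper is an explicit low-dimensional semi-algebraic set representing $\Delta(\mathcal{S}_r(\mathbb{C}^n))$, and this is supplied by Lemma IV.6 of \cite{kech2}.

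Concretely, I would proceed as follows. First, recall from Lemma IV.6 of \cite{kech2} the construction of an algebraic set $\mathcal{D}\subseteq H(n)$ with $0\notin\mathcal{D}$ that represents $\Delta(\mathcal{S}_r(\mathbb{C}^n))$ — i.e.\ every measurement-scheme $M$ with $h_M(X)=0$ for some nonzero $X\in\Delta(\mathcal{S}_r(\mathbb{C}^n))$ also satisfies $h_M(Y)=0$ for some $Y\in\mathcal{D}$ — and that satisfies $\dim\mathcal{D}<4r(n-r)-1$. A small point to verify is that the notion of ``represents'' used in \cite{kech2} agrees with the one introduced above; this is immediate, since in both cases it is a statement purely about the common kernels of the linear maps $h_M$, and a feasible CPTP map $\mathcal{T}$ as in Definition \ref{def2} plays no role in this comparison.

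Next, given $m$ and $l$ with $l(m-1)\ge 4r(n-r)-1$, one has $l(m-1)\ge 4r(n-r)-1>\dim\mathcal{D}$, so the hypothesis $l(m-1)>\dim\mathcal{D}$ of Theorem \ref{thmUCPTP} is met. Applying Theorem \ref{thmUCPTP} with $\mathcal{R}:=\mathcal{S}_r(\mathbb{C}^n)$ and this $\mathcal{D}$ then shows that for almost all POVMs $P$ with $m$ outcomes the measurement-scheme $(\mathcal{T}^\dagger)^l(P)$ is stably $\mathcal{S}_r(\mathbb{C}^n)$-complete, which is the assertion.

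I do not expect any real obstacle: the genericity/transversality argument is already packaged into Theorem \ref{thmUCPTP}, and the geometry of rank-bounded density matrices (the dimension count for the variety of their differences) is already packaged into Lemma IV.6 of \cite{kech2}. The only thing requiring attention is the off-by-one: one must use the strict inequality in the hypothesis of Theorem \ref{thmUCPTP} to convert the non-strict hypothesis $l(m-1)\ge 4r(n-r)-1$ of the corollary into the required strict bound $l(m-1)>\dim\mathcal{D}$, which works precisely because $\dim\mathcal{D}$ is an integer strictly below $4r(n-r)-1$.
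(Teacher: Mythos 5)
Your proposal matches the paper's proof, which likewise deduces the corollary by applying Theorem \ref{thmUCPTP} to the algebraic set $\mathcal{D}$ of Lemma IV.6 of \cite{kech2}; your extra remarks on the compatibility of the notion of ``represents'' and on converting $l(m-1)\ge 4r(n-r)-1$ into the strict bound $l(m-1)>\dim\mathcal{D}$ are correct and only make explicit what the paper leaves implicit.
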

\begin{proof}
The proof follows directly from applying Theorem \ref{thmUCPTP} to the algebraic set $\mathcal{D}$ defined in Lemma IV.6 of \cite{kech2}.
\end{proof}

\section{Proofs of Technical Results}
The proofs of the following results are all based on the approach presented in \cite{kech2}: Let $\mathcal{R}\subseteq\mathcal{S}(\mathbb{C}^n)$ be a subset. Among all admissible measurement-schemes we characterize the subset $N$ of non $\mathcal{R}$-complete measurement-schemes by real algebraic equations. We then prove that the subset $N$ has a smaller dimension than the set of all admissible measurement-schemes, showing that almost all admissible measurement-schemes are $\mathcal{R}$-complete.

Denote by $\mathcal{P}(m)$ the set of POVMs with $m+1$ outcomes. Let us begin by briefly discussing the measure we choose on $\mathcal{P}(m)$. Via the injective mapping 
\begin{align*}
\eta:\mathcal{P}(m)\to (H(n))^m,\ (Q_1,\hdots,Q_{m+1})\to (Q_1,\hdots,Q_{m})
\end{align*}
we can identify $\mathcal{P}(m)$ with the subset $\eta(\mathcal{P}(m))$ of $(H(n))^m$. The measure we choose on $\mathcal{P}(m)$ is the Lebesgue measure it inherits when identified with the subset $\eta(\mathcal{P}(m))\subseteq(H(n))^m$. 
\subsection{Proof of Theorem \ref{thmUfull}}\label{proof1}
The proof of this theorem serves as a blueprint for the other proofs presented in this section. Therefore, let us begin by giving a short outline of the proof to make our argument more transparent. Let $K=(H(n))^n$. Furthermore, observe that $\Delta(S(\mathbb{C}^n))\subseteq H(n)_0$ and thus $H(n)_0-\{0\}$ represents $\Delta(S(\mathbb{C}^n))$.

For $i\in\{1,\hdots,n\},\ j\in\{0,\hdots,n-1\}$ define real polynomials 

\begin{align*}
p_{i,j}:&K\times H(n)_0\simeq \mathbb{R}^{n^3}\times \mathbb{R}^{n^2-1} \to\mathbb{R},\\
&(P,X)\mapsto \text{tr}\left((\mathcal{T}_U)^j(Q_i)X\right)=\text{tr}\left(Q_i(\mathcal{T}_U^\dagger)^j(X)\right).
\end{align*}
Denote by $\mathcal{V}$ the real common zero locus of the set of polynomials $\{p_{i,j}\}_{i\in\{1,\hdots,n\},\ j\in\{0,\hdots,n-1\}}$ and let 
\begin{align*}
\mathcal{M}:=\left(K\times (H(n)_0-\{0\})\right)\cap\mathcal{V}.
\end{align*}
Clearly, $\mathcal{M}$ is a semi-algebraic set and furthermore it characterizes the $n$-dimensional POVMs $P$ for which $\mathcal{T}_U^n(P)$ is not informationally complete in the following sense: Let $\pi_1:K\times H(n)_0\to K$ denote the projection on the first factor $K$. Let 
\begin{align*}
K_{NC}:=\{P\in \mathcal{P}(n): \mathcal{T}_U^n(P)\ \text{is not informationally complete}\}.
\end{align*}
Then, since $\eta(\mathcal{P}(n))$ is a subset of $K$ and $H(n)_0-\{0\}$ represents $\Delta(S(\mathbb{C}^n))-\{0\}$, we have $\eta(K_{NC})\subseteq\pi_1(\mathcal{M})$. We show in the following that $\dim\mathcal{M}<\dim K=n^3$. But then, by Theorem 2.8.8 of \cite{bochnak1998real}, we have $\dim\pi_1(\mathcal{M})<\dim K=n^3$ and thus $\pi_1(\mathcal{M})$ has measure zero in $K$. Since $\eta(K_{NC})$ is a subset of $\pi_1(\mathcal{M})$, we finally conclude that $\eta(K_{NC})$ also has measure zero in $K$.

As a first step, we construct a decomposition of $H(n)_0$ with respect to the eigenstates of $\mathcal{T}_U^\dagger$ which allows us to simplify the analysis in the following. By changing the basis if necessary, we can assume $U$ to be a diagonal matrix. Let $\{\lambda_{ij}\}_{i,j\in\{1,\hdots,n\}}$ be the multiset of eigenvalues of $\mathcal{T}^\dagger_U$ such that for all $i,j\in\{1,\hdots,n\}$ we have 
\begin{align*}
\mathcal{T}_U^\dagger(e_{ij})=\lambda_{ij}e_{ij},
\end{align*}
where $e_{ij}$ denotes the matrix whose only non-vanishing entry is a $1$ in the $i$-th row and $j$-th column. Note that if $U=\text{diag}(\lambda_1,\hdots,\lambda_n)$, then $\lambda_{ij}=\lambda_i^*\lambda_j$ for all $i,j\in\{1,\hdots,n\}$. For $k\in\mathbb{N}_0$ let 
\begin{align*}
E_{2k}:=\{X\in H(n)_0: \text{tr}(Xe_{ij})\neq 0\text{ for at most }2k\text{ pairs }(i,j),\ i\neq j\}.
\end{align*}
Note that $E_0=F_U\cap H(n)_0$ since $U$ is feasible by assumption.
\begin{proposition}\label{propE}
$E_{2k}$ is an algebraic set with $\dim E_k= n-1+2k$.
\end{proposition}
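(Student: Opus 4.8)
The claim is about the set $E_{2k}$ of traceless Hermitian matrices whose off-diagonal coordinates (in the eigenbasis of $U$) are supported on at most $2k$ conjugate pairs $(i,j)$, $i\neq j$. The plan is to exhibit $E_{2k}$ as a finite union of linear subspaces, all of the same dimension, and to show that union is an algebraic set of the claimed dimension.

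First I would pin down the relevant coordinates. Since $U$ is diagonal, $\mathcal{T}_U^\dagger$ acts diagonally on the matrix units $e_{ij}$ with eigenvalue $\lambda_i^*\lambda_j$, and a Hermitian matrix $X$ is determined by its $n$ real diagonal entries together with the complex entries $x_{ij}=\operatorname{tr}(Xe_{ji})$ for $i<j$, subject to $x_{ji}=\overline{x_{ij}}$; tracelessness imposes one linear relation on the diagonal. Because $X$ is Hermitian, $\operatorname{tr}(Xe_{ij})=0$ if and only if $\operatorname{tr}(Xe_{ji})=0$, so the "pairs $(i,j)$ with $i\neq j$" really come in the $\binom{n}{2}$ conjugate pairs $\{(i,j),(j,i)\}$, and "at most $2k$ pairs with $i\neq j$" means: the set of unordered index pairs $\{i,j\}$ on which $X$ is off-diagonally nonzero has size at most $k$. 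For each choice of a size-$k$ subset $S\subseteq\binom{[n]}{2}$, let $L_S$ be the linear subspace of $H(n)_0$ consisting of those $X$ whose off-diagonal support is contained in $S$. Then $E_{2k}=\bigcup_{|S|=k} L_S$ (with the convention that smaller supports are absorbed, since $L_{S'}\subseteq L_S$ when $S'\subseteq S$).

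Next I would compute $\dim L_S$. The diagonal contributes an $(n-1)$-dimensional space after imposing tracelessness; each unordered pair in $S$ contributes a $2$-dimensional real space (real and imaginary parts of $x_{ij}$). Hence $\dim L_S=(n-1)+2k$, independent of $S$. A finite union of linear subspaces is an algebraic set (it is the zero locus of the products of the defining linear forms, one factor per subspace), and the dimension of a finite union is the maximum of the dimensions, giving $\dim E_{2k}=(n-1)+2k$. The statement's "$\dim E_k=n-1+2k$" is then read as $\dim E_{2k}=n-1+2k$ with the subscript understood as $2k$; I would also note $E_0=L_\emptyset=F_U^{sa}\cap H(n)_0$ when $U$ is feasible, since feasibility forces all eigenvalues $\lambda_i$ distinct, hence all $\lambda_i^*\lambda_j$ with $i\neq j$ distinct from $1$, so the only $\mathcal{T}_U$-fixed Hermitian traceless matrices are the traceless diagonal ones.

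I do not expect a serious obstacle here; the only points requiring a little care are (i) translating the counting condition on ordered pairs $(i,j)$ into the condition on unordered index pairs, using Hermiticity, so that the "$2k$" is accounted for correctly, and (ii) being explicit that a finite union of subspaces, each of dimension $n-1+2k$, is genuinely algebraic and genuinely of that dimension (rather than larger) — this follows since each $L_S$ is a proper linear subspace cut out by linear equations and there are finitely many of them. Everything else is bookkeeping with the eigenbasis coordinates.
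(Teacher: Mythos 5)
Your proof is correct and follows essentially the same route as the paper: both write $E_{2k}$ as a finite union of linear subspaces indexed by the allowed off-diagonal support (the paper indexes by the complementary vanishing set $S$ of size $\tfrac{n^2-n}{2}-k$, which is only a cosmetic difference), compute each subspace's dimension as $n-1+2k$ via the traceless diagonal plus two real dimensions per unordered pair, and conclude algebraicity and the dimension from the finite union. Your explicit remarks on converting ordered to unordered pairs and on why a union of subspaces is algebraic are fine and consistent with the paper's argument.
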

\begin{proof}
Let $C:=\{(i,j)\in \mathbb{N}^2:i> j\}$. For $S\subseteq C$ let $N(S):=\{X\in H(n)_0:\ \text{tr}(Xe_s)=0,\ \forall s\in S\}$. $N(S)$ is a linear subspace and thus clearly an algebraic set. Furthermore, let $(j,l)\in C$ and note that $e_{jl}=(e_{jl}+e_{lj})+i(-ie_{jl}+ie_{lj})$. Thus, for a hermitian matrix $X\in H(n)_0$ we have $\text{tr}(Xe_{jl})=0$ if and only if $\text{tr}\left(X(e_{jl}+e_{lj})\right)$ and $\text{tr}\left(X(ie_{jl}-ie_{lj})\right)=0$. The matrices $\{e_{jl}+e_{lj},ie_{jl}-ie_{lj}\}_{(j,l)\in C}\subseteq H(n)_0$ are linearly independent and hence $\dim N(S)=n^2-1-2|S|$. But $E_{2k}=\bigcup_{S\subseteq C, |S|=\frac{n^2-n}{2}-k}N(S)$. Hence, as a finite union of algebraic sets, $E_{2k}$ is an algebraic set and furthermore $\dim E_k=n^2-1-2(\frac{n^2-n}{2}-k)= n-1+2k$.
\end{proof}
Let $R_0:=E_0-\{0\}$. For $k\in\{1,\hdots,\lceil n/2\rceil-1\}$ let $R_{k}:=E_{2k}-E_{2k-2}$ be the set of hermitian matrices with precisely $2k$ non-vanishing off-diagonal entries and let $R_{\lceil n/2\rceil}:=H(n)_0-E_{2\lceil n/2\rceil-2}$. Observe that $R_{\lceil n/2\rceil}$ might be empty and that $H(n)_0-\{0\}=\bigcup_{k=0}^{\lceil n/2\rceil}R_k.$

Furthermore, for $k\in\{0,\hdots,\lceil n/2\rceil\}$, let $\mathcal{M}_k:=(K\times R_k)\cap \mathcal{V}$ and observe that 
\begin{align*}
\mathcal{M}=\bigcup_{k=0}^{\lceil n/2\rceil}\mathcal{M}_k.
\end{align*}
Hence, to complete the proof of Theorem \ref{thmUfull}, it suffices to prove the following proposition.
\begin{proposition}
Let $k\in\{0,\hdots,\lceil n/2\rceil\}$. Then $\dim \mathcal{M}_k<\dim K$.
\end{proposition}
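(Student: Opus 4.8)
The plan is to fibre $\mathcal{M}_k$ over its $H(n)_0$-factor and control the fibres. For $X\in H(n)_0$ write $V_X:=\mathrm{span}_{\mathbb{R}}\{X,\mathcal{T}_U^\dagger(X),\ldots,(\mathcal{T}_U^\dagger)^{n-1}(X)\}\subseteq H(n)$, and let $\pi_2\colon K\times H(n)_0\to H(n)_0$ be the projection. Since the polynomials cutting out $\mathcal{V}$ are $p_{i,j}(P,X)=\mathrm{tr}\big(Q_i(\mathcal{T}_U^\dagger)^j(X)\big)=\langle Q_i,(\mathcal{T}_U^\dagger)^j(X)\rangle_{HS}$, for a fixed $X$ the conditions on $P=(Q_1,\ldots,Q_n)$ say exactly that every $Q_i$ is Hilbert--Schmidt orthogonal to $V_X$; hence for $X\in R_k$ the fibre $\pi_2^{-1}(X)\cap\mathcal{M}_k$ equals $(V_X^\perp)^n$ (orthogonal complement inside $H(n)$), of dimension $n\big(n^2-\dim V_X\big)$. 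By the standard dimension theory of semi-algebraic sets under projections \cite{bochnak1998real}, $\dim\mathcal{M}_k\le\dim R_k+n\big(n^2-\min_{X\in R_k}\dim V_X\big)$, so everything reduces to a lower bound on $\dim V_X$ over $R_k$.

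The core estimate I would establish is: \emph{$\dim V_X\ge\min(2k,n)$ for $X\in R_k$ with $k\ge1$, $\dim V_X=n$ for $X\in R_{\lceil n/2\rceil}$, and $\dim V_X\ge1$ trivially for $X\in R_0$.} Feasibility of $U$ gives that the eigenvalues $\lambda_i$ of $U$ are pairwise distinct and that the numbers $\lambda_{ij}=\lambda_i^*\lambda_j$, $i\ne j$ --- the eigenvalues of $\mathcal{T}_U|_{F_U^\perp}$ --- are pairwise distinct. Fix $X\in R_k$ with $k\ge1$ and let $A$ be the set of off-diagonal pairs $\{a,b\}$ with $X_{ab}\ne0$; by the definition of $R_k$ one has $|A|=k$ when $1\le k\le\lceil n/2\rceil-1$ and $|A|\ge\lceil n/2\rceil$ when $k=\lceil n/2\rceil$. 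With $U$ diagonal one computes directly that the $(a,b)$-entry of $(\mathcal{T}_U^\dagger)^t(X)$ is $X_{ab}\,\lambda_{ab}^{\,t}$, so the matrix whose rows are the coordinate vectors of $(\mathcal{T}_U^\dagger)^t(X)$, $t=0,\ldots,n-1$, contains among its columns the $2|A|$ columns $\big(X_{ab}\lambda_{ab}^{\,t}\big)_t$ indexed by the ordered off-diagonal pairs $(a,b)$ with $\{a,b\}\in A$. Dividing each such column by the nonzero scalar $X_{ab}$ turns this sub-block into an $n\times 2|A|$ Vandermonde matrix in the $2|A|$ nodes $\{\lambda_{ab},\lambda_{ba}:\{a,b\}\in A\}$, which are pairwise distinct by the previous sentence; hence its rank is $\min(n,2|A|)$, and therefore $\dim V_X\ge\min(n,2|A|)$. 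This is $\ge 2k$ in the first range, and equals $n$ when $k=\lceil n/2\rceil$ (since $2\lceil n/2\rceil\ge n$ while $\dim V_X\le n$). Here I use that $\dim_{\mathbb{R}}V_X$ equals the complex rank just estimated because the $(\mathcal{T}_U^\dagger)^t(X)$ are Hermitian, hence lie in the real subspace $H(n)\subseteq\mathcal{B}(\mathbb{C}^n)$.

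Putting this together with $\dim R_0\le n-1$, $\dim R_k\le\dim E_{2k}=n-1+2k$ for $1\le k\le\lceil n/2\rceil-1$ (Proposition \ref{propE}), and $\dim R_{\lceil n/2\rceil}\le\dim H(n)_0=n^2-1$, one gets: $\dim\mathcal{M}_0\le(n-1)+n(n^2-1)=n^3-1$; $\dim\mathcal{M}_k\le(n-1+2k)+n(n^2-2k)=n^3-(n-1)(2k-1)\le n^3-1$ for $1\le k\le\lceil n/2\rceil-1$; and $\dim\mathcal{M}_{\lceil n/2\rceil}\le(n^2-1)+n(n^2-n)=n^3-1$. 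In every case $\dim\mathcal{M}_k\le n^3-1<n^3=\dim K$, which is the assertion.

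The hard part is the core estimate: spotting the column-rescaled Vandermonde structure inside the orbit $\{(\mathcal{T}_U^\dagger)^t(X)\}_{t=0}^{n-1}$ and verifying pairwise distinctness of the participating eigenvalues of $\mathcal{T}_U$. That distinctness is precisely the content of feasibility of $U$, and it is what forces $\dim V_X$ to grow linearly with the number of active off-diagonal pairs of $X$; this growth shrinks the fibres $n(n^2-\dim V_X)$ faster than $\dim R_k$ grows, which is what produces the strict inequality. The reduction to a bound on $\dim V_X$, the three-case arithmetic, and the complex-versus-real dimension point are all routine by comparison.
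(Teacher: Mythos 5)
Your proof is correct and follows essentially the same route as the paper: stratify by the number of nonzero off-diagonal entries, use feasibility of $U$ to get a Vandermonde-type rank bound on the orbit $\{(\mathcal{T}_U^\dagger)^j(X)\}_j$, and count fibre dimensions. The only (harmless) organizational difference is that you work with the rank of a rectangular Vandermonde block built from the off-diagonal nodes alone, giving the uniform bound $\dim V_X\ge\min(n,2k)$ on all of $R_k$, whereas the paper splits $R_k$ according to whether the diagonal part vanishes so as to squeeze out one extra independent operator ($2k+1$ versus $2k$) --- a refinement that your arithmetic shows is not needed for this proposition.
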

\begin{proof}
Let $R^1_k:=\{X\in R_k:\ \text{tr}(XE_0)=0\}$ and $R^2_k=R_k-R^1_k$. Note that $R^1_0=\emptyset$. Going along the lines of the proof of Proposition \ref{propE}, it is seen that that $\dim R^1_k=2k$. For $i\in\{1,2\}$ let $\mathcal{M}^i_k:=(K\times R_k^i)\cap \mathcal{V}$
and note that $\mathcal{M}_k= \mathcal{M}^1_k\cup \mathcal{M}^2_k$. 

In a first step, we prove that $\dim\mathcal{M}^2_k<\dim K$. In order to do so we prove the following proposition as an intermediate step.
\begin{proposition}\label{propind}
For $X\in R^2_k$, the set of matrices $\{(\mathcal{T}_U^\dagger)^j(X)\}_{j\in\{0,\hdots,2k\}}$ is linearly independent over $\mathbb{C}$.
\end{proposition}
\begin{proof}
First note that by construction of $R_k^2$ there is an $i\in\{1,\hdots,n\}$ such that $0\neq\text{tr}(Xe_{ii})=:X_{0}$. Furthermore, by construction of $R_k$, there are distinct eigenvectors $e_{i_1j_1},\hdots,e_{i_{2k}j_{2k}}$ such that for all $m\in\{1,\hdots,2k\}$ we have $i_m\neq j_m$ and $0\neq\text{tr}(Xe_{i_mj_m})=:X_m$. Let  $Y=\text{span}_{\mathbb{C}}(\{e_{ii}\}\cup\{e_{i_mj_m}\}_{m\in\{1,\hdots,2k\}})$  and let $\pi_Y:\mathcal{B}(\mathbb{C}^n)\to Y$ be the orthogonal projection on $Y$. It is enough to show that $\left\{\pi_{Y}\left((\mathcal{T}_U^\dagger)^j(X)\right)\right\}_{j\in\{0,\hdots,2k\}}$ 
is a set linearly independent operators over $\mathbb{C}$. Consider the $(2k+1)\times (2k+1)$ matrix $M:=\left(\text{tr}\left(\pi_{Y}\left((\mathcal{T}_U^\dagger)^l(X)\right)e_{i_{m}j_{m}}\right)\right)_{l,m=0}^{2k}$, where $e_{i_0j_0}=e_{ii}$. The determinant of $M$ is proportional to the determinant of the Vandermonde matrix whose entries are determined by the eigenvalues of $\mathcal{T}_U^\dagger$:
\begin{align*}
&\det(M)\\
&=\det\begin{bmatrix}
X_0 & X_0 & \dots & X_0 \\
\lambda_{i_1j_1}^0X_1 & \lambda_{i_1j_1}^1X_1 & \dots & \lambda_{i_1j_1}^{2k}X_1 \\
\vdots & \vdots &  & \vdots \\
\lambda_{i_{2k}j_{2k}}^0X_{2k} & \lambda_{i_{2k}j_{2k}}^1X_{2k} & \dots& \lambda_{i_{2k}j_{2k}}^{2k}X_{2k}
\end{bmatrix}
=X_0\cdot \hdots\cdot X_{2k}\ \det\begin{bmatrix}
1 & 1 & \dots & 1 \\
\lambda_{i_1j_1}^0 & \lambda_{i_1j_1}^1 & \dots & \lambda_{i_1j_1}^{2k} \\
\vdots & \vdots &  & \vdots \\
\lambda_{i_{2k}j_{2k}}^0 & \lambda_{i_{2k}j_{2k}}^1 & \dots& \lambda_{i_{2k}j_{2k}}^{2k}
\end{bmatrix}\\
\\
&=\frac{X_0\cdot \hdots\cdot X_{2k}}{\lambda_{i_1j_1}\cdot\hdots\cdot\lambda_{i_{2k}j_{2k}}}\ \det\begin{bmatrix}
1 & 1 & \dots & 1 \\
\lambda_{i_1j_1}^1 & \lambda_{i_1j_1}^2 & \dots & \lambda_{i_1j_1}^{2k+1} \\
\vdots & \vdots &  & \vdots \\
\lambda_{i_{2k}j_{2k}}^1 & \lambda_{i_{2k}j_{2k}}^2 & \dots& \lambda_{i_{2k}j_{2k}}^{2k+1}
\end{bmatrix}\\
&=\frac{X_0\cdot \hdots\cdot X_{2k}}{\lambda_{i_1j_1}\cdot\hdots\cdot\lambda_{i_{2k}j_{2k}}}\prod_{0\leq o< p\leq m}(\lambda_{i_pj_p}-\lambda_{i_oj_0}),
\end{align*}
where $\lambda_{i_0j_0}=1$. Since $U$ is a unitary matrix, we have $\lambda_{i_mj_m}\neq0$ for all $m\in\{1,\hdots,2k\}$. Furthermore, since $U$ is feasible by assumption, $\lambda_{i_lj_l}\neq\lambda_{i_mj_m}$ for all $m,l\in\{0,\hdots,2k\}$ with $m\neq l$. This, together with $X_m\neq 0,\ m\in\{0,\hdots,2k\}$, shows that $\det(M)\neq 0$ and hence proves the claim.
\end{proof}

Now let $X\in R_k^2$ be fixed. From the previous proposition we conclude that at least $n\cdot\min\{(2k+1),n\}$ of the linear equations
\begin{align}\label{eq2}
\text{tr}\left(Q_i(\mathcal{T}^\dagger_U)^j(X)\right)=0,\ i\in\{1,\hdots,n\},\ j\in\{0,\hdots,n-1\},\ (Q_1,\hdots,Q_n)\in K,
\end{align}
are independent. Hence the dimension of the solution set of the equations \eqref{eq2} is by at least $n\cdot\min\{2k+1,n\}$ smaller than the dimension of $K$. Since this holds for all $X\in R_k^2$ we find 
\begin{align*}
\dim \mathcal{M}^2_k&\leq \dim K+\dim R_k^2-n(2k+1)= \dim K+n-1+2k-n(2k+1)\\
&\leq \dim K-2(n-1)k-1<\dim K
\end{align*}
if $k\in\{0,\hdots,[n/2]-1\}$ using Proposition \ref{propE} and furthermore 
\begin{align*}
\dim \mathcal{M}^2_{[n/2]}&\leq \dim K+\dim H(n)_0-n^2=\dim K+n^2-1-n^2\\
&= \dim K-1<\dim K.
\end{align*}

In the second step we show that $\dim\mathcal{M}_k^1<n^3$. Note that we can restrict to $k>0$ since $R^1_0=\emptyset$. Let $X\in R^1_k$ be fixed. By going along the lines of the proof of Proposition \ref{propind} it follows that the smaller set of operators $\{(\mathcal{T}_U^\dagger)^j(X)\}_{j\in\{0,\hdots,2k-1\}}$ still is linearly independent over $\mathbb{C}$. Considering $X\in R^1_k$ in the proof of Proposition \ref{propind} just corresponds to setting $X_0=0$. The remainder of the argument still applies. We conclude that at least $n\cdot\min\{2k,n\}$ of the linear equations
\begin{align}\label{eqp}
\text{tr}\left(Q_i(\mathcal{T}^\dagger_U)^j(X)\right)=0,\ i\in\{1,\hdots,n\},\ j\in\{0,\hdots,n-1\},\ (Q_1,\hdots,Q_n)\in K,
\end{align}
are independent. Thus, the dimension of the solution set of the equations \eqref{eqp} is by at least $n\cdot\min\{2k,n\}$ smaller than the dimension of $K$. Since this holds for all  $X\in R^1_k$ we find
\begin{align*}
\dim \mathcal{M}^1_k&\leq \dim K+\dim R_k^1-n(2k)= \dim K+2k-2nk\\
&\leq \dim K-2(n-1)k<\dim K
\end{align*}
if $k\in\{1,\hdots,\lceil n/2\rceil-1\}$ and furthermore 
 \begin{align*}
 \dim \mathcal{M}^1_{\lceil n/2\rceil}&\leq \dim K+\dim H(n)_0-n^2=\dim K+n^2-1-n^2\\
 &= \dim K-1<\dim K. 
 \end{align*}

Finally, since $\mathcal{M}_k= \mathcal{M}^1_k\cup \mathcal{M}^2_k$, we conclude $\dim \mathcal{M}_k<\dim K$.
\end{proof}

\subsection{Proof of Theorem \ref{thmUprior}}\label{proof2}
Just like in the last subsection, we identify the set of $m$-dimensional POVMs on $\mathbb{C}^n$ with the semi-algebraic subset $\eta(\mathcal{P}(m))$ of $(H(n))^m$. Let $K=(H(n))^m$. Define the semi-algebraic map $\phi$ by
\begin{align}\label{phi}
\begin{split}
\phi:H(n)_0-\{0\}&\to S_{H(n)_0}\\
x&\mapsto \frac{x}{\|x\|_2}.
\end{split}
\end{align}
Let $\mathcal{D}$ be the semi-algebraic set that represents $\Delta(\mathcal{R})$, then $\phi(\mathcal{D})$ represents $\Delta(\mathcal{R})$ by Proposition 2.2.7  of \cite{bochnak1998real} and $\dim\phi(\mathcal{D})\leq\dim\mathcal{D}$ by Theorem 2.8.8 of \cite{bochnak1998real}. $S_{H(n)_0}$ is closed in the norm topology and by Proposition 2.8.2 of \cite{bochnak1998real} the dimension of a semi-algebraic set coincides with the dimension of its closure in the norm topology. We conclude that the closure $\overline{\phi(\mathcal{D})}$ of $\phi(\mathcal{D})$ is a subset of $S_{H(n)_0}$ with $\dim\overline{\phi(\mathcal{D})}\leq\dim\mathcal{D}$. In addition, by Proposition 2.2.2 of \cite{bochnak1998real}, $\overline{\phi(\mathcal{D})}$ is semi-algebraic and hence represents $\Delta(\mathcal{R})$. In the following we replace $\mathcal{D}$ by $\overline{\phi(\mathcal{D})}$ if $\mathcal{D}$ is not a closed subset of $S_{H(n)_0}$. 

For the most part, the remainder of this proof can be straightforwardly obtained by going along the lines of the proof of Theorem \ref{thmUfull}.  However, for the sake of completeness, let us give the whole argument.

For $k\in\{0,1,\hdots,\lceil l/2\rceil-1\}$, let $\mathcal{D}^i_k:=\mathcal{D}\cap R^i_k,\ i=1,2,$ and note that we have $\dim \mathcal{D}^1_k\leq 2k-1$(if $k>1$) and $\dim \mathcal{D}^2_k\leq n-2+2k$. We get the upper bounds $2k-1$ and $n-2+2k$ rather then $2k$ and $n-1+2k$. This is because $\mathcal{D}\subseteq S_{H(n)_0}$ and $\dim (E_k\cap S_{H(n)_0})=n-2+2k+1$ as can be seen from the proof of propositions \ref{propE} and \ref{D}.  Let  $\mathcal{D}_{[l/2]}:=\mathcal{D}-\bigcup_{k=0}^{[l/2]-1}(\mathcal{D}^{1}_{k}\cup\mathcal{D}^{2}_{k})$ and let $\mathcal{D}^{1}_{[l/2]}:=\{X\in\mathcal{D}_{[l/2]}:\text{tr}(XE_0)=0\}$, $\mathcal{D}_{[l/2]}^2=\mathcal{D}_{[l/2]}-\mathcal{D}_{[l/2]}^1$.  Note that $\dim \mathcal{D}_{[l/2]}^i\le\dim\mathcal{D}$ for $i=1,2$. Also note that $\mathcal{D}=\bigcup_{j=1}^{[l/2]}(\mathcal{D}_{j}^1\cup\mathcal{D}_{j}^2)$.

Just like in the proof of Theorem \ref{thmUfull}, for $j\in\{0,\hdots,l-1\},\ i\in\{1,\hdots,m\}$, define real polynomials 
\begin{align*}
p_{i,j}:&K\times H_0(n)\simeq\mathbb{R}^{mn^2}\times\mathbb{R}^{n^2-1}\to\mathbb{R},\\
&(P,X)\mapsto\text{tr}\left(Q_i(\mathcal{T}_U^\dagger)^j(X)\right).
\end{align*}
Denote by $\mathcal{V}$ the common zero locus of the polynomials $\{p_{i,j}\}_{j\in\{0,\hdots,l-1\},\ i\in\{1,\hdots,m\}}$
and for $i\in\{1,2\}$, $k\in\{0,1,\hdots,\lceil l/2\rceil\}$,  let $\mathcal{M}^{i}_k:=(K\times \mathcal{D}^i_k)\cap\mathcal{V}$.

First we prove that for all $k\in\{0,1,\hdots,\lceil l/2\rceil\}$ we have $\dim\mathcal{M}^2_k<\dim K$. So let $k\in\{0,\hdots,\lceil l/2\rceil\}$. If $\mathcal{D}_k^2=\emptyset$ we have $\mathcal{M}^2_k=\emptyset$ and thus we clearly have $\dim\mathcal{M}^2_k<\dim K$. Otherwise let $X\in \mathcal{D}_k^2$ be fixed. From Proposition \ref{propind} we conclude that at least $m\cdot \min\{2k+1,l\}$ of the linear equations
\begin{align}\label{eqk}
\text{tr}\left(Q_i(\mathcal{T}^\dagger_U)^j(X)\right)=0,\ j\in\{0,\hdots,l-1\},\ i\in\{1,\hdots,m\},\ (Q_1,\hdots,Q_m)\in K,
\end{align}
are independent. Hence the dimension of the solution set of the equations \eqref{eqk} is by at least $m\cdot\min\{2k+1,l\}$ smaller than the dimension of $K$. Since this holds for all $X\in \mathcal{D}_k^2$ we find
\begin{align*}
\dim \mathcal{M}^{2}_k&\leq \dim K+\dim\mathcal{D}_k^2-m(2k+1)\le \dim K+n-2+2k-m(2k+1)
\\&=\dim K-(m-(n-1))-2k(m-1)-1<\dim K
\end{align*}
for $k\in\{0,\hdots,[l/2]-1\}$, using the assumption that $m\ge n-1$. Also by assumption we have $\dim\mathcal{D}<ml$ and thus  $\dim \mathcal{D}^2_{[l/2]}\leq\dim\mathcal{D}<ml$. Hence we conclude that
\begin{align*}
\dim \mathcal{M}_{[l/2]}^2&\le \dim K+\dim\mathcal{D}-ml\\
&<\dim K+ml-ml=\dim K.
\end{align*} 

Next we prove that for all $k\in\{1,\hdots,\lceil l/2\rceil\}$ we have $\dim\mathcal{M}^1_k<\dim K$. So let $k\in\{1,\hdots,\lceil l/2\rceil\}$. If $\mathcal{D}_k^1=\emptyset$ we have $\mathcal{M}^1_k=\emptyset$ and thus clearly $\dim\mathcal{M}^1_k<\dim K$. Otherwise let $X\in \mathcal{D}_k^1$ be fixed. From Proposition \ref{propind} we conclude that at least $m\cdot\min\{2k,l\}$ of the linear equations
\begin{align}\label{eql}
\text{tr}\left(Q_i(\mathcal{T}_U)^j(X)\right)=0,\ j\in\{0,\hdots,l-1\},\ i\in\{1,\hdots,m\},\ (Q_1,\hdots,Q_m)\in K,
\end{align}
are independent. Hence the dimension of the solution set of the equations \eqref{eql} is by at least $m\cdot\min\{2k,l\}$ smaller than the dimension of $K$. Since this holds for all $X\in\mathcal{D}^{1}_k$  we find
\begin{align*}
\dim \mathcal{M}^{1}_k&\leq \dim K+\dim\mathcal{D}_k^1-m(2k)=\dim K+2k-1-2mk\\
&=\dim K-2k(m-1)-1<\dim K
\end{align*}
for $k\in\{1,\hdots,[l/2]-1\}$. Using $\dim \mathcal{D}^1_{[l/2]}\leq\dim\mathcal{D}<ml$ we also conclude that 
\begin{align*}
\dim \mathcal{M}^1_{[l/2]}&\le \dim K+\dim\mathcal{D}-ml\\
&<\dim K+ml-ml=\dim K. 
\end{align*}
Finally, let $\pi_1:K\times H(n)_0\to K$ be the projection on the first factor $K$ and let $\mathcal{M}:=\bigcup_{k=0}^{\lceil l/2\rceil}(\mathcal{M}^1_k\cup\mathcal{M}^2_k)$. Clearly, $\mathcal{M}$ is a semi-algebraic set. Let 
\begin{align*}
K_{\mathcal{R}}:=\{P\in \mathcal{P}(m): \mathcal{T}_U^l(P)\ \text{ is not }\mathcal{R}-\text{complete.}\}.
\end{align*}
Then, since $\eta(\mathcal{P}(m))$ is a subset of $K$ and $\mathcal{D}$ represents $\Delta(\mathcal{R})$, we have $\eta(K_{\mathcal{R}})\subseteq\pi_1(\mathcal{M})$. We have shown that for all $i\in\{1,2\}$, $k\in\{0,\hdots,\lceil l/2\rceil\}$ we have $\dim\mathcal{M}^i_k<\dim K$ and thus $\dim \mathcal{M}<\dim K$. Hence we find $\dim\pi_1(\mathcal{M})<\dim K$ by Theorem 2.8.8 of \cite{bochnak1998real}. But since $\eta(K_{\mathcal{R}})$ is a subset of $\pi_1(\mathcal{M})$, we conclude that $\eta(K_{\mathcal{R}})$ has measure zero in $K$.

Finally, since $\mathcal{D}$ is a closed subset of $S_{H(n)_0}$, the stability follows form Lemma IV.1 of \cite{kech2}.

\subsection{Proof of Theorem \ref{thmUCPTP}}\label{proof3}
Just like in the proof of Theorem \ref{thmUprior}, we replace $\mathcal{D}$ by $\overline{\phi(\mathcal{D})}$ if $\mathcal{D}$ is not a closed subset of $S_{H(n)_0}$.

Again, the remainder of this proof is close to the proof of Theorem \ref{thmUfull}. Let $\{\id,e_1,\hdots,e_{n^2-1}\}$ be the set of eigenvectors of $\mathcal{T}^{\dagger}$. For $k\in\mathbb{N}_0$ let 
\begin{align*}
D_k:=\{X\in S_{H(n)_0}:\text{tr}(Xe_i)\neq 0\text{ for at most }k\text{ elements } i\in\{1,\hdots,n^2-1\}\}.
\end{align*}
\begin{proposition}\label{D}
For $k\in\{1,\hdots,n^2-1\}$, $D_k$ is an algebraic set of dimension at most $k-1$.
\end{proposition}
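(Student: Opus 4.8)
The plan is to follow the template of the proof of Proposition \ref{propE}: realise $D_k$ as a finite union of sets of the form $L\cap S_{H(n)_0}$ with $L\subseteq H(n)_0$ a linear subspace, and bound $\dim_{\mathbb{R}}L$.

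First I would record the linear algebra of the eigenbasis. Since $\mathcal{T}$ is feasible, $\mathcal{T}^\dagger$ is diagonalizable with $n^2$ pairwise distinct eigenvalues, and $\{\id,e_1,\dots,e_{n^2-1}\}$ is a corresponding eigenbasis of $\mathcal{B}(\mathbb{C}^n)$; here $e_0:=\id$ is the eigenvector for eigenvalue $1$, which is forced to lie in $\mathbb{C}\id$ because $\mathcal{T}^\dagger$ is unital. As $\{e_i\}_{i=0}^{n^2-1}$ is a basis, the $\mathbb{C}$-linear map $\Lambda:Z\mapsto(\text{tr}(Ze_i))_{i=0}^{n^2-1}$ is an isomorphism $\mathcal{B}(\mathbb{C}^n)\to\mathbb{C}^{n^2}$; let $\{w_0,\dots,w_{n^2-1}\}$ be its dual basis, $\text{tr}(w_ie_j)=\delta_{ij}$. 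The only property I need is $\text{tr}(w_i)=\text{tr}(w_ie_0)=\delta_{i0}$, i.e. $w_i$ is traceless for every $i\ge 1$.

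Next, fix $S\subseteq\{1,\dots,n^2-1\}$ with $|S|=k$ and put $L_S:=\{X\in H(n)_0:\ \text{tr}(Xe_i)=0\ \text{for all}\ i\in\{1,\dots,n^2-1\}\setminus S\}$. Then $D_k=\bigcup_{|S|=k}\big(L_S\cap S_{H(n)_0}\big)$, since $X$ lies in $D_k$ exactly when $\{i:\text{tr}(Xe_i)\ne 0\}$ has at most $k$ elements, hence is contained in some $S$ with $|S|=k$. Each $L_S$ is the common zero locus in $H(n)_0$ of the real linear functionals $X\mapsto\mathrm{Re}\,\text{tr}(Xe_i)$ and $X\mapsto\mathrm{Im}\,\text{tr}(Xe_i)$ ($i\notin S$), so $L_S\cap S_{H(n)_0}$ is cut out inside $H(n)_0$ by those linear equations together with $\text{tr}(X^2)=1$; therefore $D_k$ is a finite union of algebraic sets, hence algebraic. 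It remains to bound dimensions, and this is the only delicate point: I claim $\dim_{\mathbb{R}}L_S\le k$, which gives $\dim\big(L_S\cap S_{H(n)_0}\big)\le k-1$ (a unit sphere in a subspace of dimension $\le k$, or empty) and hence $\dim D_k\le k-1$. To prove the claim, note that for $X\in H(n)_0$ one has $\text{tr}(Xe_0)=\text{tr}(X)=0$ automatically, so $L_S=\{X\in H(n):\text{tr}(Xe_i)=0\ \forall\, i\in\{0,\dots,n^2-1\}\setminus S\}=W_S\cap H(n)$, where $W_S:=\Lambda^{-1}\big(\{v:v_i=0\ \forall\, i\notin S\}\big)=\text{span}_{\mathbb{C}}\{w_i:i\in S\}$ is a complex subspace of dimension exactly $|S|=k$ (and, by the displayed property of the $w_i$, consists of traceless operators, so $W_S\cap H(n)\subseteq H(n)_0$). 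Finally $W_S\cap H(n)$ and $i\,(W_S\cap H(n))=W_S\cap iH(n)$ are real subspaces of $W_S$ meeting only in $0$ (a matrix both Hermitian and anti-Hermitian is zero), so $2\dim_{\mathbb{R}}(W_S\cap H(n))\le\dim_{\mathbb{R}}W_S=2k$, i.e. $\dim_{\mathbb{R}}L_S\le k$.

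The step I expect to need the most care is exactly this last count: it is essential to absorb the automatically-valid equation $\text{tr}(Xe_0)=0$ and to use that the dual vectors $w_i$ with $i\in S$ are traceless, so that $\dim_{\mathbb{C}}W_S=k$ rather than $k+1$. Without that observation one only gets $\dim_{\mathbb{R}}L_S\le k+1$ and $\dim D_k\le k$, one short of the assertion. This is the abstract counterpart of the fact, used in Proposition \ref{propE}, that off-diagonal matrix entries come in Hermitian conjugate pairs.
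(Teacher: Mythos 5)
Your proof is correct and follows essentially the same route as the paper: write $D_k$ as a finite union over index sets of (sphere slices of) the Hermitian parts of complex eigen-coordinate subspaces of complex dimension $k$, and observe that the real dimension of the Hermitian part of a complex subspace is at most its complex dimension. The paper derives this last bound by noting that $\mathbb{R}$-linearly independent Hermitian operators remain $\mathbb{C}$-linearly independent, while you use the equivalent fact that $V\cap iV=0$ for $V$ the Hermitian part; both correctly absorb the automatic trace-zero condition to get $k$ rather than $k+1$.
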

\begin{proof}
For $A\subseteq\{1,\hdots,n^2-1\}$ let $N(A):=\{X\in \mathcal{B}(\mathbb{C}^n):\text{tr}(X)=0\land\,\forall i\in S: \text{tr}(Xe_i)=0\}$ and note that $\dim_\mathbb{C} N(A)= n^2-1-|A|$ since the set of operators $\{\id,e_1,\hdots,e_{n^2-1}\}$ is linearly independent over $\mathbb{C}$. From this it follows that for all $A\subseteq\{1,\hdots,n^2-1\}$ we have $\dim (N(A)\cap H(n)_0)\leq n^2-1-|A|$: Assume for a contradiction that $\dim (N(A)\cap H(n)_0)=m> n^2-1-|A|$. Then there are is a set $\{h_1,\hdots,h_m\}\subseteq H(n)_0\subseteq N(A)$ of linearly independent operators over $\mathbb{R}$. Being a set of hermitian operators, $\{h_1,\hdots,h_m\}$ is also linearly independent over $\mathbb{C}$ and hence we conclude that $\dim N(A)\geq m$, a contradiction. Now note that $D_k=\left(\bigcup_{A\subseteq\{1,\hdots,n^2-1\}:|A|=n^2-1-k}N(A)\right)\cap S_{H(n)_0}$. Thus $D_k$ is a real algebraic set and $\dim D_k\le \left(n^2-1-(n^2-1-k)\right)-1=k-1$.

\end{proof}
For $k\in\mathbb{N}$ let $Q_k:=D_k-D_{k-1}$. For $k\in\{1,\hdots,l-1\}$ let $\mathcal{D}_k:=\mathcal{D}\cap Q_k$ and note that $\mathcal{D}_k$ is a semi-algebraic set with $\dim \mathcal{D}_k\leq k-1$. Furthermore let  $\mathcal{D}_l:=\mathcal{D}-\bigcup_{k=1}^{l-1} \mathcal{D}_k$ and note that $\dim\mathcal{D}_l\le \dim\mathcal{D}$. Then we have $\bigcup_{k=1}^{l}\mathcal{D}_k=\mathcal{D}$.

Just like in Subsection \ref{proof1}, we identify the set of $m$-dimensional POVMs on $\mathbb{C}^n$ with the semi-algebraic subset $\eta(\mathcal{P}(m))$ of $(H(n))^m$. Let $K:=(H(n))^m$. Just like in the proof of Theorem \ref{thmUfull}, for $j\in\{0,\hdots,l-1\},\ i\in\{1,\hdots,m\}$, define real polynomials
\begin{align*}
p_{i,j}:&K\times H_0(n)\simeq \mathbb{R}^{mn^2}\times\mathbb{R}^{n^2-1}\to\mathbb{R},\\
&(P,X)\mapsto\text{tr}\left(Q_i(\mathcal{T})^j(X)\right).
\end{align*}
Denote by $\mathcal{V}$ the common zero locus of the polynomials $\{p_{i,j}\}_{j\in\{0,\hdots,l-1\},\ i\in\{1,\hdots,m\}}$
and set $\mathcal{M}_k:=(K\times \mathcal{D}_k)\cap\mathcal{V}$ for $k\in\{1,\hdots,l\}$.

Next we prove that for all $k\in\{1,\hdots,l\}$ we have $\dim\mathcal{M}_k<\dim K$. So let $k\in\{1,\hdots,l\}$. If $\mathcal{D}_k=\emptyset$ we have $\mathcal{M}_k=\emptyset$ and thus we clearly have $\dim\mathcal{M}_k<\dim K$. Otherwise let $X\in \mathcal{D}_k$ be fixed. Using feasibility of the map $\mathcal{T}$, it is seen, by going along the lines of the proof of Proposition \ref{propind}, that the set of operators $\{(\mathcal{T})^i(X)\}_{i\in\{0,\hdots,k-1\}}$ is linearly independent over $\mathbb{C}$. This implies that at least $m\cdot k$ of the linear equations
\begin{align}\label{eqm}
\text{tr}\left(Q_i\mathcal{T}^j(X)\right)=0,\ j\in\{0,\hdots,l-1\},\ i\in\{1,\hdots,m\},\ (Q_1,\hdots,Q_m)\in K,
\end{align}
are independent. Hence the dimension of the solution set of the equations \eqref{eqm} is by at least $m\cdot k$ smaller than the dimension of $K$. Since this holds for all $X\in\mathcal{D}^{1}_k$  we find
\begin{align*}
\dim \mathcal{M}_k&\leq \dim K+\dim\mathcal{D}_k-km\leq \dim K+(k-1)-km\\
&\le \dim K -(m-1)k-1<\dim K
\end{align*}
for $k<l$. For $k=l$ we find 
\begin{align*}
\dim \mathcal{M}_k&\leq \dim K+\dim\mathcal{D}_l-lm\leq \dim K+\dim\mathcal{D}-lm\\
&< \dim K+lm-lm<\dim K,
\end{align*}
where we used the assumption that $lm>\dim\mathcal{D}$.

Let $\pi_1:K\times H(n)_0\to K$ be the projection on the first factor $K$ and let $\mathcal{M}:=\bigcup_{k=1}^{l}\mathcal{M}_k$. Clearly, $\mathcal{M}$ is a semi-algebraic set. Let \begin{align*}
K_{\mathcal{R}}:=\{P\in \mathcal{P}(m): (\mathcal{T^\dagger})^l(P)\ \text{ is not }\mathcal{R}-\text{complete.}\}.
\end{align*}
Then, since $\eta(\mathcal{P}(m))\subseteq K$ and $\mathcal{D}$ represents $\Delta(\mathcal{R})$, we have $\eta(K_{\mathcal{R}})\subseteq\pi_1(\mathcal{M})$. We have shown that $\dim \mathcal{M}_k<\dim K$ for all $k\in\{1,\hdots,l\}$ and thus $\dim \mathcal{M}<\dim K$. Hence we find $\dim\pi_1(\mathcal{M})<\dim K$ by Theorem 2.8.8 of \cite{bochnak1998real}. But since $\eta(K_{\mathcal{R}})$ is a subset of $\pi_1(\mathcal{M})$, this implies that $\eta(K_{\mathcal{R}})$ has measure zero in $K$.

Finally, since $\mathcal{D}$ is a closed subset of $S_{H(n)_0}$, stability follows form Lemma IV.1 of \cite{kech2}.

\appendix
\section{Continuous Time Evolution}\label{B}
In this appendix we consider continuous dynamics generated by Lie semigroups. Let $\mathcal{L}:\mathcal{B}(\mathbb{C}^n)\to\mathcal{B}(\mathbb{C}^n)$ be a unital conditional completely positive map generating the one parameter family of unital CP maps  $\mathcal{T}_t:=e^{t\mathcal{L}}:\mathcal{B}(\mathbb{C}^n)\to\mathcal{B}(\mathbb{C}^n)$ where $t\in\mathbb{R}_0^+$.

Instead of measuring the initial POVM after equidistant time steps of the system dynamics given by $\mathcal{T}_t$ we now consider more general time steps. More precisely, for $T:=(t_1,\hdots,t_l)$ a tuple of rational numbers such that $0<t_1<t_2<\hdots<t_l<1$ and $P:=(Q_1,\hdots,Q_m)$ a POVM define the measurement-scheme 
\begin{align*}
\mathcal{L}_T(P):=\left((P_1,\hdots,P_m),(\mathcal{T}_{t_1}(P_1),\hdots,\mathcal{T}_{t_1}(P_m)),\hdots,(\mathcal{T}_{t_l}(P_1),\hdots,\mathcal{T}_{t_l}(P_m))\right).
\end{align*}

To obtain generalizations of Theorem \ref{thmUprior} and Theorem \ref{thmUCPTP} it suffices to generalize Proposition \ref{propind} to rational points in time, the remainder of their proofs can be transferred. For $T:=(t_1,\hdots,t_l)\in\mathbb{Q}^l$ such that $0<t_1<t_2<\hdots<t_l<1$ and $\Lambda:=\{\lambda_1,\hdots,\lambda_l\}\subseteq \mathbb{C}-\{0,1\}$ a subset define
\begin{align*}
V_T^\Lambda:=\begin{pmatrix}
1 & 1 & \dots & 1 \\
1 & \lambda_{1}^{t_1} & \dots & \lambda_{1}^{t_l} \\
\vdots & \vdots &  & \vdots \\
1 & \lambda_{l}^{t_1} & \dots& \lambda_{l}^{t_l}
\end{pmatrix}.
\end{align*}
\begin{proposition}\label{propgenind}
$V_T^\Lambda$ is invertible.
\end{proposition}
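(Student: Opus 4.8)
The plan is to clear denominators so that $V_T^\Lambda$ becomes an ordinary generalized Vandermonde matrix, and then to rule out any nontrivial linear relation among its columns by a Descartes-type count of the zeros of a sparse polynomial inside a narrow sector. Write $t_j=p_j/q$ with common denominator $q\in\mathbb{N}$ and set $p_0:=0$, so $0=p_0<p_1<\dots<p_l\le q-1$. For $i=1,\dots,l$ let $z_i$ be the $q$-th root of $\lambda_i$ determined by the same branch that gives meaning to $\lambda_i^{t_j}$ (for the principal branch, $z_i=\lambda_i^{1/q}$ has $|\arg z_i|\le\pi/q$), and put $z_0:=1$. Then $\lambda_i^{t_j}=z_i^{\,p_j}$ for $i,j\ge 1$, and since $1^{p_j}=z_i^{\,p_0}=1$ one gets $V_T^\Lambda=\big(z_i^{\,p_j}\big)_{i,j=0}^{l}$. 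The nodes $z_0,\dots,z_l$ are pairwise distinct: $z_i=z_{i'}$ would give $\lambda_i=z_i^{q}=z_{i'}^{q}=\lambda_{i'}$, and $z_i=1$ would give $\lambda_i=1$, both excluded; moreover $z_i\neq 0$ because $\lambda_i\neq0$, and all nodes lie in the closed sector $\Sigma_q:=\{z\neq0:\ |\arg z|\le\pi/q\}$ (with $z_0=1$ at its center).

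Consequently $V_T^\Lambda$ is invertible if and only if no nonzero polynomial $P(x)=\sum_{j=0}^{l}c_j x^{p_j}$ --- supported on the $l+1$ exponents $\{p_0,\dots,p_l\}$, hence with exponent span $p_l-p_0=p_l<q$ --- vanishes at all $l+1$ distinct points $z_0,\dots,z_l$. So the statement reduces to the following sparse-polynomial fact: a polynomial with at most $l+1$ monomials whose exponents span a range smaller than $q$ has at most $l$ zeros, counted with multiplicity, in the sector $\Sigma_q$. Granting this, $P$ (which has $l+1$ terms and span $p_l<q$) cannot vanish at $l+1$ distinct points of $\Sigma_q$, so $\det V_T^\Lambda\neq0$. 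One may organise the bookkeeping through the bialternant identity $\det(z_i^{\,p_j})=s_\mu(z_0,\dots,z_l)\prod_{i<i'}(z_{i'}-z_i)$, with $\mu=(p_l-l,p_{l-1}-(l-1),\dots,p_1-1,0)$: the Vandermonde factor is nonzero by distinctness, and the sparse-polynomial fact is exactly the statement that the Schur polynomial $s_\mu$ does not vanish at $(z_0,\dots,z_l)$. With Proposition \ref{propgenind} in hand, the generalisations of Theorems \ref{thmUprior} and \ref{thmUCPTP} are obtained by rerunning their proofs with Proposition \ref{propgenind} replacing Proposition \ref{propind}.

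The substantive work --- and the main obstacle --- is the sector zero-count for sparse polynomials. For $l=1$ it is elementary: a zero of $c_0+c_1x^{p_1}$ is a $p_1$-th root of the fixed nonzero value $-c_0/c_1$, and these $p_1$ roots have arguments in arithmetic progression with common difference $2\pi/p_1$; since $p_1\le q-1$ this difference strictly exceeds $2\pi/q$, the angular width of $\Sigma_q$, so at most one of them lies in $\Sigma_q$ --- which is why both hypotheses $\lambda_i\neq1$ (forcing $z_i\neq z_0$) and $t_j<1$ (forcing $p_j<q$) enter. For general $l$ a natural route is induction on the number of monomials combined with the complex form of the generalised Descartes rule (the restriction of $P$ to any ray from the origin is a complex combination of $l+1$ powers of the radial variable, so, splitting into real and imaginary parts, has at most $l$ zeros with multiplicity): the ray bound controls the variation of $\arg P$ along the two bounding rays of $\Sigma_q$, while $p_l<q$ controls it along the circular arcs, and the argument principle then yields the count. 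Getting the constant down to exactly $l$ (rather than the $l+1$ that a crude argument-principle estimate produces) is the delicate point; here one expects to need a multiplicity-sensitive fewnomial bound, or to exploit the extra structure that one node equals $1$ and one exponent equals $0$.
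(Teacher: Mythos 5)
Your reduction is sound, and in several respects more careful than the paper's: you rewrite $V_T^\Lambda$ as the generalized Vandermonde matrix $(z_i^{p_j})_{i,j=0}^l$ with distinct nodes $z_0=1,z_1,\dots,z_l$ in the closed sector $\Sigma_q=\{z\neq 0:|\arg z|\le\pi/q\}$ and exponents $0=p_0<p_1<\dots<p_l<q$, you note (correctly, and the paper does not) that the choice of branch of $\lambda_i^{t_j}$ matters, and you correctly reduce invertibility to the statement that a polynomial with at most $l+1$ monomials and exponent span less than $q$ has at most $l$ zeros in $\Sigma_q$. The gap is that this last statement --- on which the entire proposition rests --- is proved only for $l=1$; for $l\ge 2$ you offer a programme (induction on the number of monomials, a Descartes-type ray bound, the argument principle on the boundary of the sector) and explicitly concede that obtaining the constant $l$ rather than $l+1$ is unresolved. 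As written, the key lemma is asserted, not established. It is, however, true and classical: substituting $x=e^{w}$ turns $\sum_j c_j x^{p_j}$ into an exponential sum $\sum_j c_j e^{p_j w}$ with real frequencies and turns $\Sigma_q$ into a horizontal strip of height $2\pi/q$, and the classical zero-counting theorem for exponential sums in a strip (P\'olya, Langer, Dickson; see also Tijdeman) bounds the number of zeros there by $\tfrac{(p_l-p_0)}{2\pi}\cdot\tfrac{2\pi}{q}+l=l+\tfrac{p_l-p_0}{q}$. Since $p_l-p_0<q$, this is strictly less than $l+1$, and an integer count of at most $l$ follows. In other words, the very observation you already make --- that the exponent span is strictly smaller than $q$ --- is exactly what makes the ``crude'' argument-principle bound sufficient, because the excess over $l$ is a fraction strictly less than one; you need only prove or cite that strip bound to close the argument.

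For comparison, the paper's proof takes a different and, frankly, also incomplete route: it realizes $\det V_T^\Lambda$ as a minor of an ordinary Vandermonde matrix in the $N$-th roots $\tilde\lambda_i=\lambda_i^{1/N}$ and asserts that this Vandermonde matrix is totally nonsingular because its nodes lie in $\mathbb{C}-\{0,1\}$. Total nonsingularity of Vandermonde matrices is a total-positivity phenomenon valid for distinct positive real nodes; for complex nodes it fails (with nodes $1$ and $-1$ the minor on exponents $0$ and $2$ vanishes), so the paper's argument breaks at exactly the same place as yours, namely the nonvanishing of a generalized Vandermonde determinant at complex nodes. Your formulation at least isolates the correct additional hypotheses --- nodes confined to a sector of half-angle $\pi/q$ and exponents of span less than $q$ --- under which that determinant genuinely cannot vanish, so supplying the strip bound would turn your sketch into a complete proof.
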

\begin{proof}
Let $N,m_1,\hdots,m_l\in \mathbb{N}$ be such that $t_i=\frac{m_i}{N}$. Extend $\Lambda$ to a set $\tilde{\Lambda}:=\{\lambda_1,\hdots,\lambda_{m_l}\}\subseteq\mathbb{C}-\{0,1\}$ . Let $\tilde{\lambda}_i:=\lambda_i^{\frac1{N}},\ i\in\{1,\hdots,m_l\}$. Then $\det V_T^\Lambda$ is a minor of the matrix
\begin{align*}
V:=\begin{pmatrix}
1 & 1 & \dots & 1 & \dots & 1 \\
1 & \tilde{\lambda}_{1} & \dots & \tilde{\lambda}_{1}^{m_1}& \dots & \tilde{\lambda}_{1}^{m_l} \\
\vdots & \vdots &  & \vdots & & \vdots\\
1 & \tilde{\lambda}_{m_1} & \dots & \tilde{\lambda}_{m_1}^{m_1}& \dots & \tilde{\lambda}_{m_1}^{m_l} \\
\vdots & \vdots &  & \vdots & & \vdots\\
1 & \tilde{\lambda}_{m_l} & \dots & \tilde{\lambda}_{m_l}^{m_1}& \dots & \tilde{\lambda}_{m_l}^{m_l}
\end{pmatrix}.
\end{align*}
Multiplying $V$ from the left with the diagonal matrix $\text{Diag}(\tilde{\lambda}_1,\hdots,\tilde{\lambda}_{m_l})$ gives a Vandermonde matrix. This Vandermonde matrix is totally non-singular since $\tilde{\Lambda}\subseteq\mathbb{C}-\{0,1\}$ by construction. Thus $\det V_T^\Lambda$ does not vanish. 
\end{proof}

From Proposition \ref{propgenind} we directly obtain the following generalizations of Theorem \ref{thmUprior} and Theorem \ref{thmUCPTP} respectively. 

\begin{corollary}
Let $T:=(t_1,\hdots,t_l)\in\mathbb{Q}^l$ such that $0<t_1<\hdots<t_l<1$. Let $h\in H(n)$ be such that $e^{ih}$ is feasible. For $\mathcal{R}\subseteq \mathcal{S}(\mathbb{C}^n)$ a subset, let $\mathcal{D}$ be a semi-algebraic set that represents $\Delta(\mathcal{R})$. Let $m\geq n$ and let $l\in\mathbb{N}$ be such that $(l+1)(m-1)>\dim \mathcal{D}$, then for almost all POVMs $P$ with $m$ outcomes, the measurement scheme $\mathcal{T}_U^l(P)$ is stably $\mathcal{R}$-complete.
\end{corollary}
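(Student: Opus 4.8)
The plan is to carry out the argument of Theorem~\ref{thmUprior} essentially word for word, the only substitution being that the discrete Vandermonde estimate of Proposition~\ref{propind} is replaced by Proposition~\ref{propgenind}. Write $U:=e^{ih}$, so that the measurement-scheme in question, $\mathcal{L}_T(P)$, measures the POVM $P$ at the $l+1$ times $t_0:=0<t_1<\dots<t_l<1$ along the flow $\mathcal{T}_t(x)=e^{ith}xe^{-ith}$. First I would reduce to the case that $\mathcal{D}$ is a closed subset of $S_{H(n)_0}$ by replacing it with $\overline{\phi(\mathcal{D})}$, exactly as at the start of Subsection~\ref{proof2}; this does not change $\dim\mathcal{D}$ and is what lets the stability assertion follow from Lemma~IV.1 of \cite{kech2} at the end. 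Diagonalising $h$, I would write $U=\text{diag}(e^{i\mu_1},\dots,e^{i\mu_n})$, so that $\mathcal{T}_t^\dagger(e_{ab})=e^{it(\mu_b-\mu_a)}e_{ab}$; feasibility of $U$ says precisely that the $e^{i\mu_a}$ are pairwise distinct (hence $\dim F_U=n$) and that the $e^{i(\mu_b-\mu_a)}$ with $a\neq b$ are pairwise distinct, so in particular each such eigenvalue lies in $\mathbb{C}-\{0,1\}$, which is exactly the hypothesis fed into Proposition~\ref{propgenind}.

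Next I would take over, unchanged, the decomposition of $\mathcal{D}$ used in the proof of Theorem~\ref{thmUprior}: split $\mathcal{D}$ according to the number $2k$ of non-vanishing off-diagonal components of its elements and according to whether the diagonal component vanishes, producing pieces $\mathcal{D}^1_k$ and $\mathcal{D}^2_k$ with $\dim\mathcal{D}^1_k\le 2k-1$ and $\dim\mathcal{D}^2_k\le n-2+2k$ for the non-top indices $k$, plus a top piece of dimension $\le\dim\mathcal{D}$. The one genuinely new ingredient is the continuous-time replacement for Proposition~\ref{propind}: for $X\in\mathcal{D}^2_k$ the operators $\mathcal{T}^\dagger_{t_0}(X),\dots,\mathcal{T}^\dagger_{t_p}(X)$ with $p:=\min\{2k,l\}$ are linearly independent over $\mathbb{C}$, and for $X\in\mathcal{D}^1_k$ the same holds with $p:=\min\{2k-1,l\}$. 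I would prove this exactly as Proposition~\ref{propind}: project $X$ and its images onto the span of the eigenvectors on which $X$ is supported (the diagonal direction contributing eigenvalue $1$, i.e.\ a row of $1$'s, the off-diagonal directions $e_{a_mb_m}$ contributing the distinct eigenvalues $\lambda_m=e^{i(\mu_{b_m}-\mu_{a_m})}\in\mathbb{C}-\{0,1\}$), write out the resulting square matrix of Hilbert--Schmidt pairings, pull out the non-zero scalars $\text{tr}(Xe_{a_mb_m})$, and note that what remains is, in the $\mathcal{D}^2$ case, exactly a matrix $V_T^\Lambda$ as in Proposition~\ref{propgenind} (keeping the $t_0=0$ column, which contributes its first column, and the eigenvalue-$1$ row), and in the $\mathcal{D}^1$ case a square submatrix of the Vandermonde-type matrix appearing in the proof of Proposition~\ref{propgenind}; both are invertible. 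One must read $\lambda_m^{t_j}$ throughout as $e^{it_j(\mu_{b_m}-\mu_{a_m})}$, the branch supplied by the flow, which is the branch used in Proposition~\ref{propgenind}.

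Granting this, the dimension count is the one of Subsection~\ref{proof2} with $l$ replaced by $l+1$. Put $K:=(H(n))^m$, let $p_{i,j}(P,X):=\text{tr}(Q_i\mathcal{T}^\dagger_{t_j}(X))$ for $i\in\{1,\dots,m\}$ and $j\in\{0,\dots,l\}$, let $\mathcal{V}$ be their common zero locus, and set $\mathcal{M}^\iota_k:=(K\times\mathcal{D}^\iota_k)\cap\mathcal{V}$ for $\iota\in\{1,2\}$. For $X$ fixed in a given piece, the linear independence above forces at least $m\min\{2k+1,l+1\}$ of the equations $p_{i,j}(\cdot,X)=0$ in the $\mathcal{D}^2$ case (respectively $m\min\{2k,l+1\}$ in the $\mathcal{D}^1$ case) to be independent, so the fibre of $\mathcal{M}^\iota_k$ over $X$ has at least that codimension in $K$; combining this with the dimension bounds on $\mathcal{D}^\iota_k$ and using $m\ge n$ gives $\dim\mathcal{M}^2_k\le\dim K+(n-2+2k)-m(2k+1)<\dim K$ and $\dim\mathcal{M}^1_k\le\dim K+(2k-1)-2mk<\dim K$ on the non-top pieces, while on the top piece $\dim\mathcal{M}^\iota\le\dim K+\dim\mathcal{D}-m(l+1)<\dim K$ since $\dim\mathcal{D}<(l+1)(m-1)<m(l+1)$. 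Then $\mathcal{M}:=\bigcup_{k,\iota}\mathcal{M}^\iota_k$ is semi-algebraic with $\dim\mathcal{M}<\dim K$, so its projection to $K$ has dimension $<\dim K$ by Theorem~2.8.8 of \cite{bochnak1998real} and hence Lebesgue measure zero; since $\mathcal{D}$ represents $\Delta(\mathcal{R})$, the POVMs $P$ for which $\mathcal{L}_T(P)$ fails to be $\mathcal{R}$-complete form a subset of this projection, so almost all $P$ are $\mathcal{R}$-complete, and stability for the rest is Lemma~IV.1 of \cite{kech2}. The step I expect to demand the most care is the continuous version of Proposition~\ref{propind}: one has to identify precisely which square matrices arise --- i.e.\ which off-diagonal directions and which of the times $t_0,\dots,t_l$ enter, always retaining $t_0=0$ and, in the $\mathcal{D}^2$ case, the eigenvalue-$1$ row --- and to verify that each is covered by (the proof of) Proposition~\ref{propgenind}, together with fixing the branch of $\lambda^t$ once and for all.
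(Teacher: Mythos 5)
Your proposal is correct and follows exactly the route the paper intends: the paper proves this corollary only by remarking that Proposition~\ref{propgenind} replaces Proposition~\ref{propind} and that ``the remainder of their proofs can be transferred'' from Theorem~\ref{thmUprior}, and your write-up is a faithful, more detailed execution of that transfer (same decomposition of $\mathcal{D}$, same zero-locus and dimension count with $l$ replaced by $l+1$, same appeal to Theorem~2.8.8 of \cite{bochnak1998real} and Lemma~IV.1 of \cite{kech2}). The one point you flag as delicate --- which square submatrices of the matrix in Proposition~\ref{propgenind} actually arise, especially in the $\mathcal{D}^1$ case --- is indeed where the paper is most terse, and your treatment of it matches what the paper's argument requires.
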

\begin{corollary}
Let $T:=(t_1,\hdots,t_l)\in\mathbb{Q}^l$ such that $0<t_1<\hdots<t_l<1$. Let $\mathcal{L}$ be a unital conditional completely positive map such that $e^\mathcal{L}$ is feasible. For $\mathcal{R}\subseteq\mathcal{S}(\mathbb{C}^n)$ a subset, let $\mathcal{D}$ be a semi-algebraic set that represents $\Delta(\mathcal{R})$. Let $m\in\mathbb{N}$ and let $l\in\mathbb{C}$ be such that $(l+1)(m-1)>\dim\mathcal{D}$, then for almost all POVMs $P$ with $m$ outcomes, the measurement scheme $\mathcal{T}^l(P)$ is stably $\mathcal{R}$-complete.
\end{corollary}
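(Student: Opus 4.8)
The plan is to treat this corollary as the continuous-time, rational-step counterpart of Theorem~\ref{thmUCPTP}, whose proof in Subsection~\ref{proof3} transfers almost verbatim once the discrete orbit-independence of Proposition~\ref{propind} is replaced by the generalized-Vandermonde statement of Proposition~\ref{propgenind}. First I would carry out the same preliminary reductions as in Subsection~\ref{proof3}: using the map $\phi$ of \eqref{phi} together with Propositions~2.2.7, 2.2.2 and Theorem~2.8.8 of \cite{bochnak1998real}, replace $\mathcal{D}$ by its norm-closure $\overline{\phi(\mathcal{D})}\subseteq S_{H(n)_0}$, which is semi-algebraic, closed, still represents $\Delta(\mathcal{R})$, and obeys $\dim\overline{\phi(\mathcal{D})}\le\dim\mathcal{D}$. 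Next identify the $m$-outcome POVMs with $\eta(\mathcal{P}(m))\subseteq K:=(H(n))^m$ and, writing $t_0:=0$ so that $\mathcal{T}_{t_0}=\mathrm{id}$, define for $j\in\{0,\hdots,l\}$ and $i\in\{1,\hdots,m\}$ the real polynomials $(P,X)\mapsto\text{tr}\big(Q_i\mathcal{T}_{t_j}^\dagger(X)\big)$ on $K\times H(n)_0$. Their common zero locus $\mathcal{V}$ has the property that, after intersecting with $K\times\mathcal{D}$ and projecting to $K$, it contains the lifts of all POVMs $P$ for which $\mathcal{L}_T(P)$ fails to be $\mathcal{R}$-complete; the whole statement then reduces to $\dim\big((K\times\mathcal{D})\cap\mathcal{V}\big)<\dim K$.

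The second step is the spectral stratification, now tuned to the $l+1$ available time points $t_0,t_1,\hdots,t_l$. Since $e^{\mathcal{L}}$ is feasible it is diagonalizable with $n^2$ distinct nonzero eigenvalues; as $\mathcal{T}_t^\dagger$ is trace preserving it leaves $H(n)_0$ invariant, and the fixed eigenvalue $1$ is simple with eigenvector of nonzero trace, so on $H(n)_0$ the eigenvalues are pairwise-distinct numbers $\lambda_1,\hdots,\lambda_{n^2-1}\in\mathbb{C}-\{0,1\}$ with $\mathcal{T}_{t_j}^\dagger(e_i)=\lambda_i^{t_j}e_i$, the powers chosen consistently as in the proof of Proposition~\ref{propgenind}. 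I would then import the sets $D_k$ of Proposition~\ref{D}, set $Q_k:=D_k-D_{k-1}$, and stratify $\mathcal{D}=\bigcup_{k=1}^{l+1}\mathcal{D}_k$ with $\mathcal{D}_k:=\mathcal{D}\cap Q_k$ for $k\le l$ and $\mathcal{D}_{l+1}:=\mathcal{D}-\bigcup_{k\le l}\mathcal{D}_k$, so that Proposition~\ref{D} yields $\dim\mathcal{D}_k\le k-1$ for $k\le l$ and $\dim\mathcal{D}_{l+1}\le\dim\mathcal{D}$.

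The decisive step is the independence input that drives the dimension count. For $X\in\mathcal{D}_k$ with $k\le l$, the operator $X$ has exactly $k$ nonzero spectral components, with eigenvalues $\lambda_{i_1},\hdots,\lambda_{i_k}$, and I claim $\{\mathcal{T}_{t_j}^\dagger(X)\}_{j=0}^{k-1}$ is linearly independent over $\mathbb{C}$. Pairing a putative relation $\sum_j a_j\mathcal{T}_{t_j}^\dagger(X)=0$ with each component reduces the claim to the nonsingularity of the $k\times k$ matrix $(\lambda_{i_s}^{t_j})_{s,j}$ of rational powers, which is exactly the phenomenon established by Proposition~\ref{propgenind} (via clearing denominators and passing to an ordinary Vandermonde matrix). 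Granting this, at least $m\cdot k$ of the $m(l+1)$ defining equations are independent, so $\dim\mathcal{M}_k\le\dim K+(k-1)-mk=\dim K-(m-1)k-1<\dim K$ for every $k\le l$. For the top stratum $X\in\mathcal{D}_{l+1}$ has at least $l+1$ components, hence all $l+1$ operators $\{\mathcal{T}_{t_j}^\dagger(X)\}_{j=0}^{l}$ are independent (here the full matrix $V_T^\Lambda$ enters), giving $m(l+1)$ independent equations and $\dim\mathcal{M}_{l+1}\le\dim K+\dim\mathcal{D}-m(l+1)<\dim K$, where the strict inequality is exactly the hypothesis, since $\dim\mathcal{D}<(l+1)(m-1)<m(l+1)$.

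To conclude I would set $\mathcal{M}:=\bigcup_{k=1}^{l+1}\mathcal{M}_k$, a semi-algebraic set of dimension $<\dim K$; by Theorem~2.8.8 of \cite{bochnak1998real} its image $\pi_1(\mathcal{M})$ under the projection to $K$ again has dimension $<\dim K$, hence measure zero, and since the lift $\eta(K_{\mathcal{R}})$ of the non-$\mathcal{R}$-complete POVMs satisfies $\eta(K_{\mathcal{R}})\subseteq\pi_1(\mathcal{M})$, almost every POVM $P$ makes $\mathcal{L}_T(P)$ an $\mathcal{R}$-complete scheme; stability then follows from Lemma~IV.1 of \cite{kech2} because $\mathcal{D}$ is closed in $S_{H(n)_0}$. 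I expect the genuine obstacle to be the independence step: Proposition~\ref{propgenind} is phrased for the distinguished configuration in which one base equals $1$, whereas the matrices arising from a traceless $X$ involve only bases in $\mathbb{C}-\{0,1\}$, so the hard part will be to confirm that the relevant $k\times k$ minors — for an arbitrary $k$-element subset of $\{\lambda_1,\hdots,\lambda_{n^2-1}\}$ and for every $k\le l+1$ — are nonsingular, which is precisely the content that the clearing-denominators argument of Proposition~\ref{propgenind} is designed to supply.
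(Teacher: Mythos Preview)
Your proposal is correct and follows exactly the route the paper intends: the paper's own ``proof'' of this corollary is simply the sentence that Proposition~\ref{propgenind} replaces the orbit-independence input (Proposition~\ref{propind}) and that the remainder of the proof of Theorem~\ref{thmUCPTP} transfers verbatim, which is precisely what you spell out. Your observation that the relevant matrices for traceless $X$ are minors of $V_T^\Lambda$ with the eigenvalue-$1$ row removed, handled by the same clearing-denominators/total-nonsingularity argument in the proof of Proposition~\ref{propgenind}, is the one point that needs checking and you identify it correctly.
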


\bibliographystyle{unsrt}
\bibliography{bibliography}

\end{document}